\begin{document}

\title{Minimal Sum Labeling of Graphs}
%
%
\author{Mat\v{e}j~Kone\v{c}n\'{y} \and Stanislav~Ku\v{c}era \and Jana~Novotn\'{a}
\and Jakub~Pek\'{a}rek \and \v{S}t\v{e}p\'{a}n~\v{S}imsa \and Martin~T\"{o}pfer\thanks{Supported by project CE-ITI P202/12/G061 of GA \v{C}R and grant SVV-2017-260452.}}

%
%
%
\institute{Faculty of Mathematics and Physics, Charles
University, Prague, Czech Republic,\\
\email{matejkon@gmail.com, stanislav.kucera@outlook.com, janka.novot@seznam.cz, edalegos@gmail.com, simsa.st@gmail.com, mtopfer@gmail.com}}
\maketitle              

\begin{abstract}
A graph $G$ is called a {\em sum graph} if there is a so-called {\em sum labeling} of $G$, i.e. an injective function $\ell: V(G) \rightarrow \mathbb{N}$ such that for every $u,v\in V(G)$ it holds that $uv\in E(G)$ if and only if there exists a vertex $w\in V(G)$ such that $\ell(u)+\ell(v) = \ell(w)$. We say that sum labeling~$\ell$ is {\em minimal} if there is a vertex $u\in V(G)$ such that $\ell(u)=1$. In this paper, we show that if we relax the conditions (either allow non-injective labelings or consider graphs with loops) then there are sum graphs without a minimal labeling, which partially answers the question posed by Miller in \cite{b:cocktail} and \cite{b:union}.
\end{abstract}

\section{Introduction}
An undirected graph $G=(V,E)$ is a {\em sum graph} if there exists an injective function $\ell: V(G) \rightarrow \mathbb{N}$ such that every pair of vertices $u\neq v \in V(G)$ is connected via an edge of $G$ if and only if there exists a vertex $w \in V(G)$ such that $\ell(w) = \ell(u) + \ell(v)$. We call the function $\ell$ a {\em sum labeling} or {\em labeling function}. The value $\ell(u) + \ell(v)$ is an {\em edge-number} of the edge $uv$ and it is {\em guaranteed} by vertex~$w$. The {\em sum number}~$\sigma(G)$ of a graph is defined as the least integer, such that $G + \bar{K}_{\sigma(G)}$ ($G$ with $\sigma(G)$ additional isolated vertices) is a sum graph. 


The concept of sum graphs was introduced by Harary \cite{b:harary} in 1990. It was further developed by Gould and R\"{o}dl \cite{b:rodl} and Miller \cite{b:cocktail}, \cite{b:bounds} who showed general upper and lower bounds on $\sigma(G)$ of order $\Omega(|E|)$ for a given general graph $G$ and better bounds for specific classes of graphs.

For some graphs the exact sum numbers are known: 
$\sigma(T_n)=1$ for trees (of order $n\ge2$) \cite{b:trees},
$\sigma(C_n)=2$ for cycles ($n\ge3$, $n\neq4$) and $\sigma(C_4)=3$ \cite{b:harary},
$\sigma(K_n)=2n-3$ for complete graphs ($n\ge4$) \cite{b:complete}, 
$\sigma(H_{2,n})=4n-5$ for cocktail party graphs ($n\ge2$) \cite{b:cocktail} and for complete bipartite graphs \cite{b:bipartite}.

In the work of Miller et al. \cite{b:cocktail} and \cite{b:union}, an open question was raised whether every sum graph has a labeling that uses number 1. Such labelings are called {\em minimal}. In \cite{b:cocktail}, a minimal labeling of complete bipartite graphs is presented. In \cite{b:union}, an upper bound on $\sigma(G)$ for $G$ being a disjoint union of graphs $G_1,G_2\dots,G_n$ is shown. If at least one of the disjoint graphs has minimal labeling then
$$\sigma(G)\le \sum_{i=1}^n\sigma(G_i) - (n-1).$$

In our work, we approach sum graphs from a different perspective. Instead of grounding our research on the properties of graphs, our basic objects are sets of integers. Given an integer set $M$, the rules of sum labeling uniquely define a graph such that it is a sum graph and its labeling consists exactly of all the integers from the given set. From our research we provide two negative answers to questions parallel to the one raised by Miller et al. 

While it is natural to require graphs to have no loops, when we construct a sum graph from an integer set, it seems more natural to allow loops (i.e. for every not necessarily distinct vertices $u,v\in V(G)$ we have $uv\in E(G)$ if and only if there exists $w\in V(G)$ such that $\ell(u)+\ell(v)=\ell(w)$). We call these graphs {\em sum graphs with loops}. When we say {\em sum graphs without loops} we refer to the previous definition, where the vertices are required to be distinct. We show the following: 

\begin{theorem}\label{thm:loops}
There exists an infinite family of sum graphs with loops which admit no minimal labeling.
\end{theorem}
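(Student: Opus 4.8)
The plan is to prove the statement constructively: for each $n$ I would exhibit an explicit integer set $M_n \subseteq \mathbb{N}$ with $\min M_n \ge 2$, let $G_n$ be the graph it generates under the loop-allowing rule, and then argue that $G_n$ is a sum graph with loops (which is immediate, since $M_n$ itself is a loop-sum-labeling of it) while \emph{no} labeling of $G_n$ uses the value $1$. To make the family genuinely infinite I would take the $M_n$ so that the $G_n$ are pairwise non-isomorphic (for instance by letting $n$ control the length of an arithmetic block in $M_n$), and I would choose $M_n$ so that $\min M_n$ does not divide every element; this kills the trivial ``rescaling'' realization $M_n/\min M_n$, which would otherwise reintroduce the label $1$.

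The engine of the argument is a structural lemma describing the vertex carrying label $1$. Suppose a labeling $\ell$ of a loop-sum graph assigns $\ell(r)=1$. Then for every other vertex $x$ the edge $rx$ is present if and only if $\ell(x)+1$ is again a used label, and $r$ carries a loop if and only if $2$ is a used label. Hence the used labels split into maximal runs of consecutive integers, $r$ is non-adjacent exactly to the top element of each run (and to itself), and an initial run $[1,b]$ forces on the vertices labeled $1,\ldots,b$ a fixed ``threshold'' pattern: labels $i,j$ are adjacent whenever $i+j\le b$, and a loop sits at $i$ whenever $2i\le b$. I would also use the elementary fact that the vertex with the largest label is always isolated (no edge and no loop). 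Together these constraints turn the assertion ``$G_n$ has a minimal labeling'' into a concrete combinatorial demand on $G_n$.

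The contradiction then comes from designing $M_n$ so that $G_n$ cannot meet that demand. First I would show that some specific vertex $r$ of $G_n$ is forced to receive label $1$ in any hypothetical minimal labeling, pinning it down by its degree together with its loop status and its position relative to the forced isolated (maximum-label) vertex. Having fixed $r$, the lemma dictates the run decomposition and the induced threshold subgraph around $r$; I would then exhibit an edge or a loop of $G_n$ that no such decomposition can reproduce, typically a loop at a vertex whose forced label $x$ has $2x$ lying outside the only admissible label set, or an edge that the threshold pattern insists must be present but which is absent in $G_n$. The hard part is precisely this rigidity step: the same abstract graph can in principle be produced by many different integer sets, so defeating the obvious candidate labeling is not enough and one must exclude all label sets containing $1$ simultaneously. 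This is where allowing loops is essential, since the self-loops record exactly the doubling relation $x\mapsto 2x$, supplying the extra rigidity that the loopless case lacks and that makes the global exclusion of the label $1$ provable.
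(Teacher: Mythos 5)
There is a genuine gap: your proposal is a strategy outline, not a proof. You never specify the sets $M_n$, and the step you yourself flag as ``the hard part'' --- excluding \emph{every} label set containing $1$, not just the obvious candidate labeling --- is exactly the content of the theorem and is left entirely undone. The structural observations you do supply (the label set splits into maximal runs of consecutive integers relative to the vertex labeled $1$; that vertex is non-adjacent precisely to the tops of runs; the maximum label is isolated) are correct but amount only to the easy first step; they are the paper's Lemma~\ref{SeqCover} specialized to the generator of label $1$. The substance of the paper's argument is everything that comes after: covers generated by \emph{other} low-label vertices, the mergeability machinery (Lemmas~\ref{mergeability}--\ref{2MerOne}) that forces arithmetic relations such as $k\,d_1=j\,d_2$ between the labels of high-degree vertices, the detailed terminal structure of the specific family $A_n$ (Lemma~\ref{TermStruct}), and a case analysis ruling out label $1$ on $\psi(2)$, on $\psi(3)$, and on everything else (Lemmas~\ref{1noton2}, \ref{1noton3}, \ref{1nothigh}). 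Your alternative plan of first \emph{forcing} which vertex must carry label $1$ and then contradicting the threshold pattern is not obviously workable --- nothing in a sum labeling pins a particular vertex to the label $1$ by degree alone --- and in any case no such forcing argument is given.

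Two further points of comparison. First, the paper does not prove Theorem~\ref{thm:loops} directly in the looped setting: it proves the loopless non-injective statement (Theorem~\ref{no1onoinj}) for the graphs $\mathcal{C}(A_n)$, $n\ge 39$, and then derives the looped case by observing that a minimal labeling of $A_n$ would transfer to a minimal labeling of a graph in $\mathcal{C}(A_n)$ after replacing each looped vertex by a clique with repeated labels. So your intuition that loops supply the decisive rigidity is the opposite of the route actually taken. Second, a single looped example \emph{can} be handled by the kind of direct case analysis you sketch (that is Theorem~\ref{No1wlEx} on the set $\{2,3,4,6,7\}$), but producing an \emph{infinite} family this way requires an argument uniform in $n$, which is precisely what the cover/mergeability machinery provides and what your sketch does not.
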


Another relaxation of the original problem (without loops) is to allow the integer set $M$ to be a multiset. This of course causes the labeling function to cease being injective, thus we call such graphs \textit{non-injective sum graphs}. Nevertheless, in this approach we may consider graphs without loops and obtain the following similar result: 

\begin{theorem}\label{no1onoinj}
There exists an infinite family of non-injective sum graphs (without loops) which admit no minimal labeling.
\end{theorem}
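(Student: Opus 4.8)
The plan is to stay inside the paper's ``a multiset of integers determines a graph'' framework and to build the family from carefully chosen multisets $M_n$ rather than from graphs. Throughout I would rely on four facts that hold for \emph{every} non-injective (loopless) labeling $\ell$ of a graph $G$, independently of which $\ell$ is used. First, since $|V(G)|$ is an isomorphism invariant, every labeling is a multiset of size $|V(G)|$, so any relabeling keeps the vertex count fixed. Second, if $L=\max M$ then $\ell(u)+L>L$ for every vertex $u$, so no sum reaches $L$ and every vertex carrying the value $L$ is isolated. Third---and this is the feature absent in the injective setting---two vertices $p,q$ with $\ell(p)=\ell(q)=a$ are twins: for any other $u$ we have $u\sim p \iff \ell(u)+a\in M \iff u\sim q$, while $p\sim q$ holds exactly when $2a\in M$. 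This twin/``$2a$'' condition is the non-injective analogue of the loop condition behind Theorem~\ref{thm:loops}, and it is what lets us impose a rigid constraint that no single vertex could carry. Fourth, a vertex of value $1$ is adjacent to $u$ precisely when $\ell(u)+1\in M$, i.e. exactly to the owners of values whose successor also occurs in $M$.

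First I would fix a parametrized multiset $M_n$ meeting four design goals: (i) $1\notin M_n$ and $\gcd(M_n)=1$, so $1$ is reachable neither directly nor by dividing out a common factor; (ii) $M_n$ has a \emph{unique} maximum, so that $G_n:=G_{M_n}$ has exactly one isolated vertex; (iii) $M_n$ contains one repeated value, creating a controlled twin pair whose mutual adjacency is pinned by the ``$2a$'' condition; and (iv) the remaining values are spread out in a Sidon-type fashion, so that every edge of $G_n$ is guaranteed by a \emph{unique} vertex, which is what turns the graph back into an essentially determined system of additive equations. By construction $G_n$ is a non-injective sum graph without loops, witnessed by $M_n$ itself, so the family is non-empty; letting $n$ grow yields infinitely many and eventually pairwise non-isomorphic examples.

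The core is a rigidity lemma: any multiset $M'$ with $G_{M'}\cong G_n$ must agree with $M_n$ up to the scaling symmetry $M\mapsto cM$. I would prove it by peeling from the top. The isolated vertex must receive $\max M'$; the second largest value $L'$ can be adjacent only to owners of $\max M'-L'$ (since $L'+\ell(u)>L'$ forces the sum to equal $\max M'$); iterating reconstructs the order and the pairwise ratios of the values, while the twin pair of goal~(iii) fixes one absolute arithmetic relation through its ``$2a$'' edge. Because $M_n$ is primitive, the only admissible scalings are by positive integers, so every labeling of $G_n$ equals $cM_n$; as $\min M_n\ge 2$ by goal~(i), the minimum value of any labeling is at least $2c\ge 2$, hence $1\notin\operatorname{im}(\ell')$ and $G_n$ has no minimal labeling. (The fourth fact gives a robustness check: a value-$1$ vertex would be non-isolated, forcing a consecutive pair $a,a+1$ into $M'$, which the reconstructed gap-free-than-$1$ value set rules out.) The main obstacle is exactly the rigidity lemma: a priori the graph records only that \emph{some} vertex guarantees each edge, and the real danger is an exotic relabeling that reshuffles these guarantors. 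Defeating this is precisely the purpose of the Sidon-type spacing in goal~(iv) and the twin constraint in goal~(iii), and verifying that together they leave no freedom is where the genuine work lies.
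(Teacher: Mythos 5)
There is a genuine gap, and you have correctly located it yourself: everything rests on the ``rigidity lemma'' (every multiset $M'$ inducing a graph isomorphic to $G_{M_n}$ equals $cM_n$), and that lemma is only asserted. Worse, the evidence is strongly against it. The paper itself exhibits the failure mode in Section~\ref{sec:noloops}: the loopless graph $\mathcal{C}^2(G)$ for $G$ induced by $\{2,3,4,6,7\}$ is induced both by the multiset $\{2,2,3,3,4,6,7\}$ and by $\{1,5,2,2,4,6,9\}$, which are not related by any scaling --- so even a dense, hand-picked multiset with a twin pair admits an ``exotic'' relabeling that reshuffles guarantors and reorders the values. Your peeling argument also presupposes that you can identify, inside the abstract graph, which vertex carries the second-largest label of $M'$; but the order of the labels is itself a property of the labeling, not of the graph, so the induction has nothing to peel from. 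Uniqueness of guarantors in $M_n$ is likewise a property of $M_n$, not of $G_{M_n}$, and does not transfer to $M'$.

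The deeper problem is that the Sidon-type spacing pushes in exactly the wrong direction. If all pairwise sums of $M_n$ are distinct, each vertex guarantees at most one edge, so $G_{M_n}$ has at most about $|M_n|$ edges and every vertex has $\Theta(n)$ terminals. The paper's only tool for converting combinatorics back into arithmetic --- Lemma~\ref{SeqCover} and the cover-merging Lemmas~\ref{mergeability}--\ref{2MerOne} --- is useful precisely when a vertex has very few terminals, so that the label-set decomposes into very few arithmetic progressions; with $\Theta(n)$ terminals the decomposition is vacuous and sparse sum graphs are notoriously flexible to relabel. This is why the paper works with the dense graphs $A_n$ (all integers from $2$ to $n+2$ except $n+1$), never attempts full rigidity, and instead only traces where the label $1$ could sit: Lemmas~\ref{1noton2}, \ref{1noton3} and \ref{1nothigh} exclude $\psi(2)$, $\psi(3)$ and $\psi(i)$ for $i>3$ respectively, which already suffices. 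To salvage your plan you would have to either prove rigidity for a concrete $M_n$ (which the above suggests is false for Sidon-type sets) or replace it with a weaker, provable statement about the possible locations of the label $1$ --- at which point you are back to the paper's argument.
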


\subsection{Preliminaries}

Let $G$ be a sum graph with some labeling $\ell$. We call set~$M = \{\ell(v) : v \in V(G)\}$ {\em label-set} of $G$.

For a finite multiset of natural numbers $S\subset \mathbb N$, let $G_S$ be the graph with elements of $S$ being its vertices and for every $u,v\in S$ let there be an edge $uv\in E(G_S)$ if and only if $u+v\in S$. Depending on context, we sometimes allow $G_S$ to have loops. We say that set~$S$ {\em induces} a graph $G_S$. Let $f$ denote the natural bijection of vertices of $G_S$ and integers in $S$. For any integer $i \in S$ we denote $\psi(i)$ the subset of $V(G_S)$ such that a vertex $v\in \psi(i)$ if and only if $f(v)=i$. We say that an integer $i$ \textit{induces} a vertex $v$ if $v \in \psi(i)$. 

We say that two vertices $u,v$ of a graph $G$ without loops are {\em equivalent} if it holds that $N(u) \backslash \{v\} = N(v) \backslash \{u\}$.

\begin{lemma}\label{NonEq}
Let $G$ be a graph with labeling $\ell$ and let $u,v$ be two of its vertices. If $\ell(u) = \ell(v)$, then $u$ and $v$ are equivalent. 
\end{lemma}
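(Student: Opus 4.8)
The plan is to prove Lemma~\ref{NonEq} directly from the definition of sum labeling by showing that any two vertices with the same label must have identical neighborhoods (up to the vertices themselves). The key observation is that in a sum graph, whether a vertex $w$ is adjacent to another vertex $x$ depends only on the \emph{label} $\ell(w)$ of $w$ and the labels present in the graph, not on the identity of $w$. So if $\ell(u) = \ell(v)$, then $u$ and $v$ interact with the rest of the graph in exactly the same way.

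First I would take an arbitrary vertex $w \in V(G)$ with $w \neq u$ and $w \neq v$, and argue that $w \in N(u)$ if and only if $w \in N(v)$. By the definition of a sum graph (without loops, as required for the equivalence notion), $uw \in E(G)$ if and only if there exists some vertex $z \in V(G)$ with $\ell(z) = \ell(u) + \ell(w)$. Substituting $\ell(u) = \ell(v)$, this condition reads $\ell(z) = \ell(v) + \ell(w)$, which is exactly the condition for $vw \in E(G)$. Hence $w \in N(u) \iff w \in N(v)$, which gives $N(u) \setminus \{v\} = N(v) \setminus \{u\}$, i.e.\ $u$ and $v$ are equivalent.

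The only subtlety to handle carefully is the treatment of the two vertices $u$ and $v$ themselves, which is precisely why the definition of equivalence removes $v$ from $N(u)$ and $u$ from $N(v)$. The edge $uv$ (if present) is excluded on both sides by the set-difference, so I do not need to worry about whether $u$ and $v$ are adjacent to each other. The witnessing vertex $z$ above could in principle coincide with $u$, $v$, or $w$; but the definition only requires the \emph{existence} of such a $z$ in $V(G)$, and that existence is symmetric in $u$ and $v$ once we know $\ell(u)=\ell(v)$, so no case analysis on the identity of $z$ is needed.

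I expect this proof to be short and essentially a one-line substitution argument; the main (minor) obstacle is simply being precise about the role of the removed vertices in the definition of equivalence, and making sure the adjacency criterion is applied for $w$ ranging over vertices distinct from both $u$ and $v$. No additional machinery beyond the definitions in the Preliminaries is required.
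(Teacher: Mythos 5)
Your proposal is correct and follows essentially the same argument as the paper: for any $w$ distinct from $u$ and $v$, the edge-numbers of $uw$ and $vw$ coincide, so the adjacency criterion gives the same answer for both, and the possible edge $uv$ is handled by the set-differences in the definition of equivalence. No meaningful difference from the paper's proof.
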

\begin{proof}
Suppose $\ell(u) = \ell(v)$. Consider any $w \in V(G)$ other than $u$ and $v$. The edge-numbers of $wv$ and $wu$ are the same, so either both edges are present or none of them is. Since we do not consider loops, the only remaining edge to consider is $uv$. If $uv \notin E(G)$, then clearly $N(u) = N(v)$. If $uv \in E(G)$, then all neighbors of $v$ are also neighbors of $u$ except $u$ itself and vice versa. In both cases, $u$ and $v$ are equivalent. 
\end{proof}

To get an analogous definition for a graph with loops one would not exclude the vertices $v$ and $u$ from the neighborhoods. This rather subtle difference actually makes dealing with sum graphs with loops much easier.

We define an operation that removes loops from sum graphs. Consider a sum graph $G$ with loops. Let us take one by one each vertex $v$ with a loop and choose any integer $k \geq 2$ (independently for each vertex). We replace $v$ with a clique $K_k$ and connect all neighbors of $v$ to all vertices from $K_k$. We denote set of all possible results of this operation by $\mathcal{C}(G)$, and denote $\mathcal{C}^{k}(G)$ the unique result where we fix all values $k$ from the construction to a given fixed value. Let $G$ be induced by a multiset $M$, we may equivalently define $\mathcal{C}(G)$ as all graphs induced by all possible multisets obtained from the set $M$ via raising the multiplicity of membership of any $i \in M$ such that $2i \in M$.

\section{Sum Graphs with Loops}\label{sec:loops}
This section serves as an introduction to the topic of this paper. We show that there is a sum graph with loops that admits no minimal sum labeling. Though this is weaker result than Theorem~\ref{thm:loops}, we provide a direct proof without usage of complex tools. The full proof of Theorem~\ref{thm:loops} is given later as a consequence of Theorem~\ref{no1onoinj}. 

For a proof, we use the graph induced by the set $\{2,3,4,6,7\}$. This specific graph was chosen based on a result of a computer experiment as the smallest graph induced by an arithmetic sequence with difference 1 starting from 2 with one element missing such that no minimal labeling was found. The proof goes through several cases and is given in the Appendix.

\begin{theorem} \label{No1wlEx}
There exists a sum graph with loops that admits no minimal labeling. 
\end{theorem}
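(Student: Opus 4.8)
The plan is to exhibit a specific sum graph with loops induced by a concrete integer set and show directly that no labeling of it can use the number $1$. Following the hint in the text, I would work with the set $S=\{2,3,4,6,7\}$ and its induced graph $G=G_S$ (allowing loops). The strategy is to prove that $G$ is a sum graph with loops (which is immediate, since $S$ itself induces it), and then to argue that \emph{every} integer set inducing an isomorphic graph must fail to contain $1$. In other words, I would fix the abstract graph $G$ and then quantify over all possible labelings $\ell$, showing that assuming $\ell(u)=1$ for some vertex $u$ leads to a contradiction with the edge/non-edge structure of $G$.

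The key structural step is to record precisely which vertices of $G$ carry loops and what the adjacency pattern looks like. A vertex labeled $i$ has a loop exactly when $2i\in S$; here $2\cdot 2=4\in S$ and $2\cdot 3=6\in S$, so the vertices labeled $2$ and $3$ have loops, while $4,6,7$ do not (since $8,12,14\notin S$). The edges come from sums landing in $S$: for instance $2+4=6$, $3+4=7$, $2+3=5\notin S$, and so on. I would tabulate the full adjacency list and the loop set as graph invariants independent of the labeling. These invariants (degrees, which vertices have loops, the pattern of triangles, and crucially the way loops interact with neighborhoods) are what any alternative labeling must reproduce.

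The heart of the argument, and what I expect to be the main obstacle, is the case analysis ruling out $1$. Suppose some labeling $\ell$ of $G$ assigns $1$ to a vertex $x$. The presence of $1$ forces strong constraints: if $y$ is any vertex with label $k$, then $x$ and $y$ are adjacent iff $k+1$ is also a label, so the neighbors of $x$ are exactly the vertices whose labels are ``one less'' than another label — this tends to create a long path-like or chain structure among the labels that the fixed graph $G$ simply does not possess. I would also exploit that a vertex labeled $1$ has a loop iff $2$ is a label, tying the loop structure at $x$ to the rest of the labeling. Because the labeling need not be the original $S$, one cannot just read off values; instead I would match the rigid combinatorial fingerprint of $G$ (its degree sequence, loop set, and adjacency) against the fingerprint forced by having a vertex labeled $1$, and derive a contradiction in each arrangement. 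The delicate part is that there may be several ways to place the remaining labels consistently with $G$'s edges, so the proof must enumerate these cases carefully — which is exactly why the authors defer the detailed multi-case verification to the Appendix.

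I would close by noting that since $S=\{2,3,4,6,7\}$ is a valid (non-minimal) labeling of $G$ and no labeling using $1$ exists, $G$ is a sum graph with loops admitting no minimal labeling, establishing the theorem; the stronger infinite-family version (Theorem~\ref{thm:loops}) is then obtained later via the reduction from Theorem~\ref{no1onoinj} rather than by extending this single example directly.
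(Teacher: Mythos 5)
Your proposal follows the same route as the paper: the same witness set $\{2,3,4,6,7\}$, the same reading-off of the loop set and adjacency, and the same overall plan of assuming a labeling that uses $1$ and contradicting the fixed edge/non-edge structure. However, as written it is a plan rather than a proof: the case analysis that you correctly identify as ``the heart of the argument'' is exactly the entire mathematical content of the theorem, and you do not carry out any of it. Saying that a vertex labeled $1$ forces ``a long path-like or chain structure among the labels that the fixed graph $G$ simply does not possess'' is an intuition, not an argument; for this particular $G$ (two looped vertices each adjacent to a common third vertex, plus two isolated vertices) one must actually exhibit the contradiction in every consistent placement of the remaining labels.

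Concretely, the missing ingredient is the organizing observation that makes the enumeration finite and short. Writing $a,b,c,d,e$ for the labels of $v_2,v_3,v_4,v_6,v_7$, the paper first shows $a,b,c$ are pairwise distinct (since $v_2,v_3,v_4$ are pairwise non-equivalent) and then that $a,b,c<d,e$: the vertex among $v_2,v_3,v_4$ carrying the largest of $a,b,c$ is incident with two edges whose distinct edge-numbers exceed $\max(a,b,c)$ and hence must be guaranteed by the two remaining labels $d$ and $e$. This pins the label $1$ to one of $v_2,v_3,v_4$, and after the symmetry $a<b$ only the cases $a=1$ and $c=1$ remain, each of which collapses in a few lines (e.g.\ forcing $e=6$ and then contradicting the absence of the edge $v_2v_6$). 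Without this reduction your ``several ways to place the remaining labels'' is not even bounded, so the proof cannot be considered complete; with it, the verification is the short two-case argument given in the paper's Appendix.
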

\begin{figure}
\centering
\includegraphics{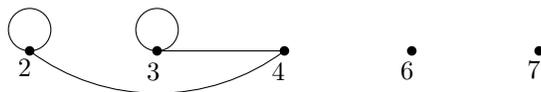}
\caption{The sum graph induced by the set $\{2,3,4,6,7\}$.}
\end{figure}

Although Theorem \ref{No1wlEx} does not require the labeling function to be injective, the theorem holds also under the constraint of injectivity. The labeling used to induce the graph is certainly injective, and the theorem shows that there exists no minimal labeling; thus, in particular no injective minimal labeling. 

It is easy to observe that sets $\{1,2,...,k\}$ induce graphs with the maximum number of edges out of all sum graphs with loops on the same number of vertices. In fact, it can be shown that any graph with the same number of vertices and edges is necessarily isomorphic to this graph. Based on this observation, it seems reasonable to assume that the set $\{2,3,...,k-1,k\}$ might induce a graph with specific structure and possibly exclude all labelings with 1 as 1 was removed from the inducing set. This idea does not hold, as the same graph is induced for example by the set $\{1,2,...,k,3k\}$. 

However, the situation seems to change dramatically once we remove one more value. Let us call {\em gap-graphs} of size $k$ all graphs induced by the set obtained from $\{2,...,k\}$ by removing one element we call a {\em gap}. Hence, a gap-graph of size $k$ has $k-2$ vertices. While for small values of $k$ some gap-graphs have a minimal labeling, we conjecture that for $k \geq 10$ none of the gap-graphs of size $k$ with gap $i$ such that $3 < i < k$ has a minimal labeling. While we do not prove this conjecture, it serves as a basic inspiration for our main result and as a consequence of our main result, we prove Theorem~\ref{thm:loops} by providing a partial proof of the conjecture for sufficiently large $n$ and a specific choice of gap.

\section{Non-injective Sum Graphs without Loops}\label{sec:noloops}
In this section, we construct graphs with (non-injective) sum labelings such that they do not admit minimal labelings.

Based on the result from the previous section and the conjecture about gap-graphs, it is a natural question whether we can modify the gap-graph idea to remove all loops and keep the desired properties that may prevent existence of minimal labelings. Let us consider the gap-graph $G$ induced by the set $\{2,3,4,6,7\}$ from Theorem \ref{thm:loops}, and its loopless modification $H = \mathcal{C}^2(G)$. The sum graph $H$ is induced by a multiset $\{2,2,3,3,4,6,7\}$ by definition. Unfortunatelly a graph isomorphic to $H$ is induced by a multiset $\{1,5,2,2,4,6,9\}$ (with the same ordering of vertices). This numbering can be naturally extended to a (non-injective) labeling of any graph from $\mathcal{C}(G)$. While this gives us a negative result, we show that for large enough gap-graphs and at least some choice of gap, the construction $\mathcal{C}$ does in fact guarantee all produced graphs to admit no minimal labeling. 

We first develop some tools applicable (up to minor adjustments) to all flavors of sum graphs (injective, non-injective, with or without loops). Namely, that labels of each sum labeling of a sum graph can be described as a set of arithmetic sequences. We show how several simultaneous description of this form limit each other and in doing so reflect some structural aspects of underlying sum graphs directly into their label-sets. 

The graphs we work with are based on arithmetic sequence of integers. We define graph $A_n$ as the unique sum graph with loops induced by the set $\{2,3,...,n-1,n,n+2\}$, in other words by a set of all integers from 2 up to $n+2$ without the second-last value $n+1$. Note that graph $A_n$ has exactly $n$ vertices.

\subsection{Sequence Description}

Let us denote an arithmetic sequence $(a,a+d,a+2d,...,a+jd)$ with difference $d$ as $(id+a)_i$ where we always consider $i$ going from 0 up to some unspecified integer. We also generally refer to sequences with difference $d$ as {\em $d$-sequences}. 

Let us fix a vertex $v$ and call it {\em generator}. The set of {\em terminals} associated with this generator is defined as $V(G)\backslash N(v)$. Note that $v$ has exactly $n-deg(v)$ terminals. It is an important observation is that unless $v$ has a loop, it is its own terminal. We say that a terminal $w$ is a {\em proper terminal} of $v$ if $w \neq v$, and is \textit{improper terminal} otherwise. 

\begin{lemma}\label{SeqProper}
Let $G$ be a sum graph without loops, let $\ell$ be a fixed labeling of $G$, let $v$ be a fixed generator and let $w$ be a proper terminal of $v$. Then there is no vertex in $G$ labeled $\ell(v)+\ell(w)$. 
\end{lemma}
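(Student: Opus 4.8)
The plan is to argue by contradiction directly from the defining property of a loopless sum graph. Suppose, toward a contradiction, that some vertex $z \in V(G)$ carries the label $\ell(z) = \ell(v) + \ell(w)$. I would then invoke the adjacency characterization for \emph{distinct} vertices: for $u \neq u'$ we have $uu' \in E(G)$ if and only if there is a vertex guaranteeing the edge-number $\ell(u) + \ell(u')$.

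The crucial point is that this characterization applies only to pairs of distinct vertices, and this is precisely where the hypothesis that $w$ is a \emph{proper} terminal enters: by definition of proper terminal, $w \neq v$, so $v$ and $w$ form an admissible distinct pair. Hence the existence of $z$ with $\ell(z) = \ell(v) + \ell(w)$ forces $vw \in E(G)$.

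This contradicts the assumption that $w$ is a terminal of $v$. Terminals are by definition the vertices of $V(G) \backslash N(v)$, so $w \notin N(v)$ and therefore $vw \notin E(G)$. The contradiction completes the argument, so no vertex of $G$ can be labeled $\ell(v) + \ell(w)$.

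I do not anticipate a genuine obstacle here; the statement is essentially an unwinding of the definitions. The only subtlety worth flagging explicitly is the distinctness of $v$ and $w$: without the properness assumption $w$ could coincide with $v$, in which case we would be asking about a loop at $v$, and the loopless edge condition says nothing about such a pair. The qualifier \emph{proper} is doing exactly this work, and the proof should make that dependence visible rather than leaving it implicit.
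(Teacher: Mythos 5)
Your argument is correct and is essentially identical to the paper's own proof: both derive $vw \in E(G)$ from the existence of a vertex labeled $\ell(v)+\ell(w)$ and contradict $w \notin N(v)$. Your explicit remark on why properness (i.e.\ $v \neq w$) is needed matches the paper's own comment immediately following the lemma.
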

\begin{proof}
For contradiction, let a vertex $u \in V(G)$ be labeled $\ell(v) + \ell(w)$. Then the edge-number of the edge $vw$ is guaranteed by $u$, thus $vw \in E(G)$. This however makes $w$ a neighbor of $v$ and we reach a contradiction. 
\end{proof}

The previous lemma does not hold for improper terminal $v$, as the edge used in the proof would be a loop and thus would not be an edge of $G$ even though it is technically guaranteed. In case of sum graphs with loops this issue does not arise. 

\begin{lemma}\label{SeqSequences}
Let $G$ be a sum graph with fixed labeling $\ell$, let $M$ denote the label-set of $G$, let $v$ be a fixed generator and let $u$ be a non-terminal of $v$. Then there exists a sequence $S = (i \ell(v)+\ell(u))_i \subseteq M$ such that its last element is a label of a proper terminal of $v$. 
\end{lemma}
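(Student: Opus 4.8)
The plan is to construct the sequence greedily, starting from $u$ and repeatedly adding $\ell(v)$, invoking the defining property of sum graphs at each step to certify that the next term still lands in $M$. First I would record the base case. Since $u$ is a non-terminal of $v$, we have $u\in N(v)$, so $uv$ is an edge (a loop if $u=v$ in the looped flavour, an ordinary edge otherwise); by the definition of a sum graph some vertex carries the label $\ell(u)+\ell(v)$, so both $\ell(u)$ and $\ell(u)+\ell(v)$ lie in $M$. This anchors the sequence at its first two terms.

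Next I would run the induction for $i\ge1$. Suppose $\ell(u)+i\ell(v)\in M$ and pick any vertex $x_i$ carrying this label. Because $i\ge1$ and $\ell(u)\ge1$ we have $\ell(x_i)=\ell(u)+i\ell(v)>\ell(v)$, so $x_i\neq v$. If $x_i$ is a non-terminal of $v$, then $x_i\in N(v)$ and the genuine edge $x_iv$ forces a vertex labeled $\ell(x_i)+\ell(v)=\ell(u)+(i+1)\ell(v)$, so the next term is again in $M$ and the construction continues. If instead $x_i$ is a terminal of $v$, I stop and output the finite sequence $\bigl(\ell(u)+j\ell(v)\bigr)_{0\le j\le i}\subseteq M$.

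Termination and correctness then follow from two observations. The terms $\ell(u)+i\ell(v)$ strictly increase because $\ell(v)\ge1$, so the chosen vertices are pairwise distinct; as $M$ is finite, the construction cannot continue forever through non-terminals, hence some $x_j$ must be a terminal and the process halts. For properness, recall that $\ell(x_i)>\ell(v)$ for every $i\ge1$, so each such $x_i$ differs from $v$; combined with the hypothesis that the index-$0$ vertex $u$ is a non-terminal, the first terminal we encounter occurs at an index $j\ge1$. Consequently $x_j\neq v$, so $x_j$ is a \emph{proper} terminal, which is exactly the claimed last element of $S$.

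The main obstacle is precisely this properness claim, namely ruling out that the sequence stops at the improper terminal $v$ itself (which is relevant in the loopless flavour, where $v$ is its own terminal). The strict growth of the terms past $\ell(v)$ is what handles this, and it is the reason the statement isolates a proper terminal rather than an arbitrary one. A minor point to dispatch is the non-injective setting: several vertices may share the label $\ell(u)+i\ell(v)$, but only one representative is ever needed at each step, so the argument is unaffected; in any case Lemma~\ref{NonEq} shows that the terminal/non-terminal status of such a vertex is independent of the chosen representative.
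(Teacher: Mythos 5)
Your proposal is correct and follows essentially the same route as the paper: iteratively apply the sum-graph property to obtain a vertex labeled $\ell(u)+i\ell(v)$, observe that this vertex cannot be $v$ because its label exceeds $\ell(v)$, and stop at the first terminal, which is therefore proper. Your write-up merely makes explicit the termination argument (strictly increasing labels in a finite $M$) and the representative-independence in the non-injective case, both of which the paper leaves implicit.
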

\begin{proof}
Since $u$ is a non-terminal of $v$, there exists an edge $uv$ with edge-number $\ell(v)+\ell(u)$. Clearly, this edge is guaranteed by some vertex $u_1$ labeled $\ell(v)+\ell(u)$. The vertex $u_1$ cannot be $v$ as $\ell(u_1) > \ell(v)$. If $u_1$ is a proper terminal, then $S = (\ell(u), \ell(u)+\ell(v))$ and we are done. Otherwise, $u_1$ is a non-terminal and we iterate the previous argument, building a sequence $S'$ from $u_1$ and setting $S = (\ell(u)).S'$ where the dot operation denotes sequence concatenation.  
\end{proof}

\begin{lemma}\label{SeqCover}
Let $G$ be a sum graph, and let $v$ be a fixed generator with $k$ terminals and label $\ell(v) = g$. Then for every labeling $\ell$ and associated label-set $M$, the following holds: 
\begin{enumerate}
\item The label-set $M$ can be described as a union of at most $k$ distinct arithmetic sequences with difference $g$. 

\item The last element of each of the sequences is a label of a terminal. 

\item Each label of a proper terminal is the last element of one of the sequences. 

\item If $v$ has $j$ non-equivalent terminals and the description of $M$ has $j$ sequences, then one of the sequences is a singleton sequence $(g)$. 
\end{enumerate}
\end{lemma}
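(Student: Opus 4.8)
The plan is to organise the label-set $M$ around the shift map $a \mapsto a+g$, where $g=\ell(v)$. The local fact driving everything is that the label of a non-terminal has a successor in $M$: if $u$ is a non-terminal then $uv \in E(G)$, and this edge must be guaranteed by a vertex labelled $\ell(u)+g$, so $\ell(u)+g \in M$. Iterating this is exactly Lemma~\ref{SeqSequences} --- from any label one keeps adding $g$ while staying inside $M$, stopping only at a terminal label. I would therefore decompose $M$ into its maximal arithmetic $g$-sequences, calling the unique element of a block with no successor in $M$ its \emph{tail}. Part~(2) is then immediate: by the local fact a vertex carrying a tail value cannot be a non-terminal, hence is a terminal. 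For part~(1) I observe that distinct maximal sequences have distinct tails, each tail is a terminal label, and distinct labels sit on distinct vertices; choosing one terminal per tail is an injection into the $k$ terminals, so there are at most $k$ sequences.

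Part~(3) follows from Lemma~\ref{SeqProper}: if $w$ is a proper terminal then no vertex carries $\ell(v)+\ell(w)=\ell(w)+g$, so $\ell(w)+g \notin M$ and $\ell(w)$ is a tail, i.e.\ the last element of its sequence. The only terminal label this does not cover is that of the improper terminal $v$ itself (present because $G$ has no loops), whose value $g$ is a tail precisely when $2g \notin M$; this lone degree of freedom is what part~(4) will fix.

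Part~(4) is where I expect the actual work to lie. By Lemma~\ref{NonEq} any two terminals sharing a label are equivalent, so counting terminals up to equivalence amounts to counting their distinct labels; I read ``$j$ non-equivalent terminals'' as saying the terminals carry exactly $j$ distinct labels. By parts~(2)--(3) the sequences biject with their tails and every tail is a terminal label, so the number of sequences is at most $j$. The hypothesis that there are exactly $j$ sequences then forces the tails to exhaust all $j$ terminal labels; in particular $g$, being the label of the improper terminal $v$, must itself be a tail, so $2g \notin M$. Since labels are positive we also have $g-g=0 \notin M$, so the maximal sequence through $g$ extends neither up nor down and equals the singleton $(g)$.

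The delicate point in part~(4) is the bookkeeping. Lemma~\ref{NonEq} only gives that equal labels force equivalence, not the converse, so the sequence count must be compared against the number of \emph{distinct labels} of terminals rather than against equivalence classes, and the improper terminal $v$ must be tracked separately from the proper terminals handled by Lemma~\ref{SeqProper}. Once each sequence is matched to a distinct terminal label and $g$ is recognised as a terminal label that equality in the count promotes to a tail, the singleton conclusion drops out.
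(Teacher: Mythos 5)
Your proposal is correct and follows essentially the same route as the paper: decompose $M$ into $g$-progressions that can only terminate at terminal labels (the content of Lemmas~\ref{SeqSequences} and~\ref{SeqProper}), bound their number by injecting tails into the set of $k$ terminals, and for part~(4) use the equality of counts to force $g$ itself to be a tail, whence positivity of labels makes its block the singleton $(g)$. The only difference is presentational --- you partition $M$ into maximal blocks of the shift $a\mapsto a+g$, while the paper builds the sequences greedily from the lowest non-terminal labels and pads with explicit singletons for uncovered proper terminals --- and your explicit handling of the ``$j$ non-equivalent terminals = $j$ distinct labels'' bookkeeping via Lemma~\ref{NonEq} is, if anything, slightly more careful than the paper's.
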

\begin{proof}
Let $u$ be a non-terminal of $v$ such that it has the lowest label of all non-neighbors. From Lemma \ref{SeqSequences} we have a sequence $S_1$ such that it covers some elements of $M$ and ends in a terminal. If $M \backslash S$ contains non-terminals, we iterate by using the Lemma \ref{SeqSequences} with the lowest remaining non-terminal label. If $M \backslash S$ contains no further non-terminals and is non-empty, then we create a singleton sequence $(\ell(w))$ for each proper terminal $w$. Clearly the only element of $M$ that can remain not-covered is $\ell(v)$, as $v$ is improper terminal. In such case we create one more singleton sequence $(\ell(v))$. Suppose $S_1 = (ig+a)_i$ and $S_2 = (ig+b)_i$, for some integers $a,b$, are two constructed non-singleton sequences ($S_1$ was constructed first). Since both $S_1$ and $S_2$ have the same difference, $a \notin S_2$ and $a < b$ by the choice of $a$ and $b$, both sequences are distinct. Clearly, no singleton sequence shares an element with any other sequence. Since there are exactly $k$ terminals, each sequence ends with a terminal label and all sequences are distinct, we get that there are at most $k$ sequences in total. Naturally, $M$ is in the union of all of the constructed sequences and the sequences contain no extra elements. This proves points 1 and 2. 

From Lemma \ref{SeqProper}, we have that if a sequence contains the label of a proper terminal, the hypothetical next element of the sequence is not a label of any vertex. Thus labels of proper terminals can only be the last elements in sequences, which proves the point 3. 

Let $v$ have $j$ non-equivalent terminals. From Lemma \ref{NonEq} we have that there are $j$ distinct labels of terminals in $M$. If there are $j$ sequences in the description of $M$ via generator $v$, we deduce from the previous points that each of the distinct labels is the last element of a distinct sequence. In particular, the label $g$ must be the last in a sequence as $v$ is its own (improper) terminal. Since all labels are positive and the difference of each sequence is $g$, the label $g$ must form a singleton sequence $(g)$. 
\end{proof}

Note that some labels generated by a sequence may be labels of several vertices, if these are equivalent. 

Let $v$ be a vertex of a fixed graph $G$. Let \textit{$v$-cover} denote the set of sequences covering the label-set $M$ of $G$ as described in Lemma \ref{SeqCover}. Each such set is associated with one difference value. If $\alpha$ is this difference value (i.e. $\alpha = \ell(v)$), then we also reference to such a cover as \textit{$\alpha$-cover}.

\subsection{Cover Merging}

While the results of the previous section do not give particularly strong results when all the degrees are low (in respect to the number of vertices), it does give strong limits on potential labelings once we have one or more vertices with almost full degree. In this section, we expand our tools to impose additional constraints when applying the previous results to multiple vertices simultaneously. 

Let $G$ be a graph with fixed proper labeling and let $M$ be its label-set. Then each vertex $g_i$ induces a cover $C_i$ of $M$, as described in Lemma \ref{SeqCover}. Each such cover $C_i$ is associated with a difference $d_i$ ($d_i = \ell(g_i)$) and a number of terminals of $g_i$ (including $g_i$) denoted $t_i$. 

We say that two covers $C_i$ and $C_j$ are \textit{mergeable} if the number of non-equivalent vertices in $G$ (and thus also the size of any label-set of $G$) is at least $2 \cdot (t_i - 1) \cdot (t_j - 1) + 3$. A proof of the following lemma is given in the Appendix.

\begin{lemma}\label{mergeability}
Let $C_i$ and $C_j$ be mergeable covers, then there exists a pair of sequences $S_i,S_j$ from covers $C_i$ resp. $C_j$ such that they share at least three elements. 
\end{lemma}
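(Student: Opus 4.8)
The plan is to argue by contradiction: I assume that every pair of sequences $S_i \in C_i$, $S_j \in C_j$ shares \emph{at most two} elements, and then show that $|M|$ (the number of non-equivalent vertices) is forced to be too small for the covers to be mergeable. First I would recall the structure supplied by Lemma~\ref{SeqCover}. Each cover partitions $M$ into pairwise disjoint arithmetic sequences; $C_i$ consists of at most $t_i$ such sequences, all with common difference $d_i = \ell(g_i)$, each sequence ends in a terminal of $g_i$, and the labels of the proper terminals are exactly the end-points of the sequences (point 3). I then split the sequences of each cover into \emph{singletons} and \emph{non-singletons}, writing $a_i$ for the number of non-singleton sequences of $C_i$ and $s_i$ for the number of singleton ones.

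The first key step is to bound $a_i$. Since $d_i = \ell(g_i)$ is the least positive multiple of $d_i$, it can be neither an interior element nor the last element of any $d_i$-sequence, so the improper terminal $g_i$ always forms its own singleton sequence. Consequently every non-singleton sequence ends in a \emph{proper} terminal, and by point 3 of Lemma~\ref{SeqCover} distinct non-singleton sequences end in distinct proper terminal labels. As $g_i$ has exactly $t_i - 1$ proper terminals, this gives $a_i \le t_i - 1$; combined with point 1 of Lemma~\ref{SeqCover} we also have $a_i + s_i \le t_i$, and each singleton sequence contributes exactly one element to $M$. The symmetric statements hold for $C_j$.

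Next I would double-count the elements of $M$ through the grid whose rows are the sequences of $C_i$ and whose columns are the sequences of $C_j$, placing each element in the unique cell determined by the $C_i$-sequence and the $C_j$-sequence containing it (this is well defined because each cover is a partition). Under the contradiction hypothesis a cell lying in a non-singleton row and a non-singleton column holds at most two elements, while every element lying in a singleton row or a singleton column is already accounted for by that singleton, one element per singleton sequence. Summing over the three kinds of cells yields the bound $|M| \le 2 a_i a_j + s_i + s_j$.

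Finally, using $a_i \le t_i - 1$, $s_i \le t_i - a_i$ and the symmetric inequalities for $j$, a short optimization of $2a_i a_j + s_i + s_j$ over the admissible integer ranges shows it is maximized at $a_i = t_i-1$, $s_i = 1$ (and symmetrically $a_j = t_j-1$, $s_j = 1$), giving $|M| \le 2(t_i-1)(t_j-1) + 2$. This contradicts mergeability, which by definition requires $|M| \ge 2(t_i-1)(t_j-1) + 3$, so some pair of sequences must share at least three elements. I expect the main obstacle to be the careful accounting of the singleton sequences: the naive grid bound $2 t_i t_j$ is far too weak, and the whole argument hinges on showing that the singletons contribute only the additive constant $2$ rather than a term growing with $t_i$ and $t_j$ — which is precisely what the inequalities $a_i \le t_i - 1$ and $a_i + s_i \le t_i$ secure.
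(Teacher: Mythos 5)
Your proof is correct and follows essentially the same route as the paper's: a pigeonhole/double-counting argument over the grid of pairwise intersections of the two covers, refined by observing that singleton sequences (in particular the one carrying the generator's own label) contribute only an additive constant, yielding the bound $2(t_i-1)(t_j-1)+2$ that contradicts mergeability. Your version is somewhat more systematic in separating the counts $a_i,s_i$ and optimizing, but the key idea is identical to the paper's.
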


\begin{lemma}\label{2MergeEq}
For any fixed mergeable covers $C_1,C_2$ there exist positive integers $j,k$, such that $GCD(j,k) = 1$, $j \leq t_1 - 1$, $k \leq t_2 - 1$ and the equation $k \cdot d_1 = j \cdot d_2$ holds. 
\end{lemma}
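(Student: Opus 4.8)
The plan is to feed the three shared elements supplied by Lemma~\ref{mergeability} into a residue-counting argument run against the covers themselves. By Lemma~\ref{mergeability} there are sequences $S_1 \in C_1$ (difference $d_1$) and $S_2 \in C_2$ (difference $d_2$) sharing at least three elements. Any two shared elements differ by a common multiple of $d_1$ and $d_2$, hence by a multiple of $L := \mathrm{lcm}(d_1,d_2)$, while consecutive shared elements differ by exactly $L$; thus the shared elements form an arithmetic progression with difference $L$, and I fix three consecutive terms $s_0 < s_1 < s_2$ with $s_{m+1}-s_m = L$. Writing $g := \gcd(d_1,d_2)$ and setting $j := d_1/g$, $k := d_2/g$, one immediately has $\gcd(j,k)=1$ and $k d_1 = d_1 d_2/g = j d_2$, which is the required equation. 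It remains to prove $j \le t_1 - 1$ and $k \le t_2 - 1$, and by the symmetry between $C_1$ and $C_2$ it suffices to treat $j \le t_1 - 1$.

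First I would establish the non-strict bound $j \le t_1$ by residues. Consider the $j$ consecutive elements of $S_2$ inside the upper block $[s_1,s_2)$, namely $x_m := s_1 + m d_2$ for $m = 0,1,\dots,j-1$ (note $j d_2 = L$, so $x_j = s_2$); all lie in $M$. Using $d_2 = gk$ and $d_1 = gj$ with $\gcd(j,k)=1$, the increments satisfy $m d_2 \bmod d_1 = g\,(mk \bmod j)$, which runs through $0,g,2g,\dots,(j-1)g$ as $m$ ranges over $0,\dots,j-1$. Hence the $x_m$ occupy $j$ pairwise distinct residue classes modulo $d_1$. Since all elements of a single sequence of the $d_1$-cover $C_1$ are congruent modulo $d_1$, the $x_m$ lie in $j$ pairwise distinct sequences of $C_1$, and by Lemma~\ref{SeqCover} these end in terminals of $g_1$ with pairwise distinct labels (distinct residues force distinct last elements), giving $j$ distinct terminals and $j \le t_1$.

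The delicate point, and the main obstacle, is upgrading this to $j \le t_1 - 1$ by producing one terminal of $g_1$ not yet counted. The natural candidate is the improper terminal $g_1$ itself, of label $d_1$, living in the residue class $0 \bmod d_1$. If the $d_1$-cover uses strictly fewer than $t_1$ sequences the bound is immediate, so I may assume it uses exactly $t_1$; then part~4 of Lemma~\ref{SeqCover} guarantees that the singleton $(d_1)$ is one of its sequences. Here the choice of the \emph{upper} block pays off: every $x_m$ satisfies $x_m \ge s_1 = s_0 + L > L \ge d_1$, so $d_1$ is strictly below all the $x_m$, whence the singleton $(d_1)$ is distinct from each of the $j$ sequences hit by the $x_m$. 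This exhibits a $(j+1)$-st terminal and yields $j \le t_1 - 1$; repeating the argument with the roles of $C_1$ and $C_2$ interchanged gives $k \le t_2 - 1$.

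I expect essentially all the friction to sit in this last step. Two configurations must be argued explicitly rather than waved through: when several terminals of $g_1$ are \emph{equivalent}, distinct sequence-endpoints need not carry distinct labels, so the count must be of terminal vertices rather than labels, invoking Lemma~\ref{NonEq} with care; and when $2d_1 \in M$, the vertex $g_1$ is a terminal that is not a sequence-endpoint at all, so it is automatically uncounted and supplies the extra terminal directly, but this subcase lies outside the hypothesis of part~4 of Lemma~\ref{SeqCover} and so has to be handled on its own. Once these bookkeeping cases are discharged, the equation and both coprime bounds follow.
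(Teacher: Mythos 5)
Your proof is correct and follows essentially the same route as the paper's: both extract three consecutive shared elements via Lemma~\ref{mergeability}, pigeonhole the intermediate labels of one sequence into pairwise distinct sequences of the other cover to obtain $j\le t_1$ and $k\le t_2$, and then exploit the preceding shared element (which forces the block to lie strictly above $d_1$, resp.\ $d_2$) to rule out the singleton sequence from Lemma~\ref{SeqCover} and gain the $-1$. The only cosmetic difference is that you fix $j=d_1/\gcd(d_1,d_2)$ and $k=d_2/\gcd(d_1,d_2)$ up front and prove distinctness of the hit sequences by residues modulo $d_1$, whereas the paper derives $\gcd(j,k)=1$ and the distinctness from the minimality of the common gap between shared elements.
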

\begin{proof}
Let us fix two mergeable covers $C_1$ and $C_2$. Let $S_1,S_2$ denote sequences such that $S_1 \in C_1$, $S_2 \in C_2$ and $|S_1 \cap S_2| \geq 3$. Let $x_0,x_1,x_2$ be the three smallest elements of $S_1 \cap S_2$ so that $x_0 < x_1 < x_2$. Since all of them come from both sequences with no holes, we may denote the distance between elements $m = x_1 - x_0 = x_2 - x_1$. 

As both $x_1,x_2$ belong to both sequences, it must hold that $m = k \cdot d_1 = j \cdot d_2$ for some positive integers $j,k$ such that $x_2$ is $k$-th element following $x_1$ in $S_1$ and also $j$-th element following $x_1$ in $S_2$. 

Suppose $d_1 = d_2$, then $k = j = 1 \leq t_1 - 1,t_2 - 1$ and the lemma holds trivially. We may assume $d_1 \neq d_2$. From definition of a single sequence, $S_1$ contains all $k$ possible elements from interval $(x_1,x_2]$. Similarly, $S_2$ contains all $j$ elements from $(x_1,x_2]$. If $GCD(k,j) > 1$ then there is some $m_0 < m$ such that $x_1+m_0$ is an element of both sequences which would contradict the minimality of $x_1,x_2$. More generally, if any two elements $x_3,x_4$ of $S_1$ such that $x_1 < x_3 < x_4 \leq x_2$ belonged to the same sequence $S_0 \in C_2$, then $x_1 + (x_4 - x_3) \in S_1 \cap S_2$ which would contradict the minimality of $x_2$. Analogously, we would reach a contradiction if any such $x_3,x_4$ belonged to any $S_0' \in C_1$. Thus the $k$ elements of $S_1$ from the interval $(x_1,x_2]$ fall into distinct sequences from $C_2$ and we have $k \leq t_2$ as $t_2$ limits the number of sequences in $C_2$. Analogously we get $j \leq t_1$. 

From Lemma \ref{SeqCover}, we know that if the $k$ elements of $S_1$ fall into $t_1$ distinct sequences from $C_2$, then one of them has to be the singleton sequence $(d_2)$. Recall that we chose $x_1,x_2$ from $S_1 \cap S_2$ so that they are preceded by some element $x_0$ from $S_1 \cap S_2$. This means that $x_1 > d_2$ and thus the singleton sequences from $C_2$ cannot, in fact, play any role. Thus, the limit on the number of sequences involved can be further reduced to $k \leq t_2 - 1$. Symmetrically, we obtain that $j \leq t_1 - 1$. 
\end{proof}

\begin{lemma}\label{2MerOne}
For any fixed mergeable covers $C_1,C_2$ of a graph $G$ without loops, let $d_1 = 1$. Then $d_2 \leq t_2 - 1$. 
\end{lemma}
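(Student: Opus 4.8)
The plan is to obtain this lemma as a direct specialization of Lemma~\ref{2MergeEq}. Since $C_1$ and $C_2$ are mergeable, that lemma provides positive integers $j,k$ satisfying $\gcd(j,k)=1$, $j \leq t_1 - 1$, $k \leq t_2 - 1$, and the relation $k \cdot d_1 = j \cdot d_2$. The first step is simply to substitute the hypothesis $d_1 = 1$ into this relation, which immediately reduces it to $k = j \cdot d_2$.

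The second step is to exploit the coprimality constraint. From $k = j \cdot d_2$ we read off that $j$ divides $k$, and hence $\gcd(j,k) = j$. Combining this with the guarantee $\gcd(j,k) = 1$ forces $j = 1$. Feeding this back into $k = j \cdot d_2$ yields $k = d_2$, and since the lemma already bounds $k \leq t_2 - 1$, we conclude $d_2 \leq t_2 - 1$, which is exactly the claim.

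I do not expect a genuine obstacle here, as the statement is essentially a corollary of Lemma~\ref{2MergeEq}; the only point demanding a moment of care is the short number-theoretic deduction that $k = j \cdot d_2$ together with $\gcd(j,k)=1$ forces $j=1$. It is worth noting that the loopless hypothesis on $G$ enters only through the applicability of Lemma~\ref{2MergeEq} (whose proof relies on the singleton-sequence analysis of Lemma~\ref{SeqCover}), so no additional argument specific to loops is required beyond invoking that lemma.
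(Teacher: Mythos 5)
Your proposal is correct and follows exactly the same route as the paper: substitute $d_1=1$ into the relation from Lemma~\ref{2MergeEq}, use $\gcd(j,k)=1$ to force $j=1$, and conclude $d_2 = k \leq t_2-1$. The only difference is that you spell out the divisibility observation ($j \mid k$ hence $\gcd(j,k)=j$) that the paper leaves implicit.
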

\begin{proof}
Consider the equation from Lemma \ref{2MergeEq}. If $d_1 = 1$ then $k = j \cdot d_2$. From the same lemma we also know that $GCD(j,k) = 1$, so necessarily $j = 1$ and $k = d_2$. Finally, we also have inequality $k \leq t_2 - 1$ which together give $d_2 \leq t_2 - 1$. 
\end{proof}

\subsection{Vertices $\psi(2)$ and $\psi(3)$ in $\mathcal{C}(A_n)$}

In this section, we explore the exact structure of terminals of graphs from $\mathcal{C}(A_n)$. Based on a number of simple properties we show that given large enough $n$, no graph from $\mathcal{C}(A_n)$ admits label 1 on any vertex from $\psi(2)$ and $\psi(3)$. 

For a vertex $v$ of a fixed labeling, let $\tau(v)$ denote the set of proper terminals of $v$. Lemma \ref{TermStruct} summarizes some basic observations. For its proof see the Appendix.

\begin{lemma}\label{TermStruct}
For any graph $G \in \mathcal{C}(A_n)$ such that $n \geq 39$, let us fix arbitrary vertices $v_i$ such that $v_i \in \psi(i)$ for values of $i$ from 2 to 6. Then for any integers $j,k$ such that $2 \leq j, 2 \leq k \leq 6$ all of the following holds: 

\begin{itemize}

\item[1.] Vertex $v_k$ has exactly $k$ proper terminals (and $k+1$ terminals in total), all of which have distinct labels. 

\item[2.] If $j \leq 3$, then the vertices $v_j$ and $v_k$ are mergeable. 

\item[3.] If labeling is minimal and either $\ell(v_2)$ or $\ell(v_3)$ equals to 1, then $\ell(v_k) \leq k$. 

\item[4.] There is exactly one proper terminal in the intersection of all $\tau(v_k)$, and this terminal has the highest label in the graph. 

\item[5.] If $j$ is not the highest label, then either $j \notin \tau(v_2)$ or $j \notin \tau(v_3)$. 

\item[6.] $\tau(v_k) \subset \tau(v_j)$ whenever $j \geq k + 2$. 

\item[7.] None of the chosen vertices is a proper terminal of any of the other chosen vertices and thus $\ell(v_j)+\ell(v_k) \in M$ for any $j \neq k$. 
\end{itemize}
\end{lemma}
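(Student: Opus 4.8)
The plan is to prove each of the seven statements about the terminal structure of graphs $G \in \mathcal{C}(A_n)$, leaning heavily on the fact that $A_n$ is induced by the near-complete arithmetic set $\{2,3,\dots,n,n+2\}$ and that the construction $\mathcal{C}$ only duplicates vertices with values $i$ such that $2i$ is present. Throughout, the key arithmetic fact to establish first is that for small values $i \in \{2,\dots,6\}$ and large $n$, the integer $i$ has a very controlled neighborhood in the inducing multiset: $u+v$ is a label exactly when it lands in the interval $[2,n]$ or equals $n+2$, and it misses exactly the value $n+1$. I would open by fixing the inducing multiset $M_0$ of $G$ (the set $\{2,\dots,n,n+2\}$ with certain multiplicities doubled) and computing, for each $v_k$ with $k \le 6$, precisely which vertices are \emph{not} neighbors of $v_k$ --- these are the terminals.

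For statement~1, I would argue that $v_k$ (with $2 \le k \le 6$) is adjacent to a vertex labeled $m$ iff $k+m \in M_0$; the proper terminals are thus those $m$ for which $k+m \notin M_0$, namely the labels $m$ for which $k+m$ falls outside $[2,n]\cup\{n+2\}$. For $n \ge 39$ and these small $k$, the only such $m$ are the $k$ largest labels plus possibly $n+1$'s absence, yielding exactly $k$ proper terminals with distinct labels (the doubling in $\mathcal{C}$ never touches these large values since their doubles exceed $n+2$). Adding the improper terminal $v_k$ itself (no loop after the $\mathcal{C}$ construction, since $v_k$ lies in a clique replacing a looped vertex) gives $k+1$ total. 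Statement~2 follows from statement~1 by plugging the terminal counts $t_j = j+1 \le 4$ and $t_k = k+1 \le 7$ into the mergeability threshold $2(t_j-1)(t_k-1)+3 \le 2\cdot 3\cdot 6+3 = 39 \le n$, so the cover sizes are small enough relative to $n$. For statement~3 I would invoke Lemma~\ref{2MerOne}: if $\ell(v_2)=1$ (or $\ell(v_3)=1$), mergeability with $v_k$ forces $d_k = \ell(v_k) \le t_k - 1 = k$.

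Statements~4 through~7 are structural consequences of the explicit terminal sets. For statement~4, the proper terminals of $v_k$ are the top $k$ labels, so the \emph{highest} label lies in every $\tau(v_k)$ for $k=2,\dots,6$, and it is the unique common element since the next-highest label is a terminal of $v_2$ but a neighbor of, say, $v_6$. Statement~6, $\tau(v_k) \subset \tau(v_j)$ for $j \ge k+2$, follows because the set of top labels that are terminals only grows as the generator value increases (a larger generator pushes more sums past $n+2$). Statement~5 says $j$ (a non-maximal label) cannot be a proper terminal of both $v_2$ and $v_3$; I would derive this from the parity/position of where $j+2$ and $j+3$ land relative to the gap at $n+1$ --- at most one of them can miss $M_0$ when $j$ is not the top label. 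Statement~7 asserts the chosen vertices are mutually adjacent: since $2 \le j,k \le 6$ and $j \ne k$, the sum $\ell(v_j)+\ell(v_k)$ lies in $[5,12] \subset [2,n]$ hence is a label, guaranteeing the edge and placing the sum in $M$.

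The main obstacle I expect is statement~5, and more generally the bookkeeping around which labels are \emph{actually} present once multiplicities are introduced by $\mathcal{C}$ --- I must be careful that the doubling operation does not accidentally create or destroy a terminal relationship among the small generators, and that the ``gap'' at $n+1$ interacts cleanly with the small offsets $+2,\dots,+6$. The cleanest route is to note that all doubled values $i$ satisfy $2i \le n$ (so $i \le n/2$), hence for $n \ge 39$ none of the top few labels (which govern terminals of $v_2,\dots,v_6$) is ever doubled, isolating the terminal analysis entirely within the original arithmetic structure and reducing statement~5 to a short case check on the residues of $j+2$ and $j+3$ modulo the position of the missing value $n+1$.
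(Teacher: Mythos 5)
Your overall strategy---computing the terminal sets of $v_2,\dots,v_6$ directly from the arithmetic structure of $\{2,\dots,n,n+2\}$, feeding the counts into the mergeability threshold $2\cdot 3\cdot 6+3=39$, and invoking Lemma~\ref{2MerOne} for point~3---is exactly the paper's route. However, there is a concrete error in your description of the terminal sets that breaks your arguments for points 4 and 5. The proper terminals of $v_k$ are \emph{not} ``the top $k$ labels'': a vertex induced by $i$ is a proper terminal of $v_k$ iff $i+k=n+1$ or $i+k>n+2$, so $\tau(v_2)$ is induced by $\{n-1,n+2\}$ (note $n$ is a \emph{neighbor} of $v_2$, since $n+2$ is a label) and $\tau(v_3)$ by $\{n-2,n,n+2\}$. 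Under your stated premise the intersection of all $\tau(v_k)$ would be the top two labels, contradicting the uniqueness you then claim; and your patch (``the next-highest label is a terminal of $v_2$ but a neighbor of $v_6$'') has the roles reversed---$n$ is a neighbor of $v_2$ and a terminal of $v_6$. The correct argument for points 4 and 5 is simply that $\tau(v_2)\cap\tau(v_3)$, computed from the explicit sets above, is the singleton induced by $n+2$.

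Two smaller issues. For point~6 your justification (``a larger generator pushes more sums past $n+2$'') silently drops the case $i+k=n+1$: there $i+(k+1)=n+2$ \emph{is} a label, so $\tau(v_k)\not\subseteq\tau(v_{k+1})$ in general, and the whole reason the lemma demands $j\ge k+2$ is that $i+(k+2)>n+2$ clears the isolated label $n+2$; you need to say this. For point~7 you write that $\ell(v_j)+\ell(v_k)\in[5,12]$, conflating the arbitrary labeling $\ell$ with the inducing values; the argument should be that $j+k\le 12\le n$ puts the edge $v_jv_k$ in $A_n$ and hence in $G$, after which \emph{any} labeling must have $\ell(v_j)+\ell(v_k)\in M$ to guarantee that edge. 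Finally, for point~2 the mergeability definition counts non-equivalent vertices of $G$, so you should note (as the paper does) that all vertices of $A_n$ are pairwise non-equivalent, giving at least $n\ge 39$ non-equivalent vertices in $G$; comparing $39$ to $n$ presupposes this.
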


For convenience, we extend the meaning of \textit{terminal}. We characterize possible labelings of graphs in terms of presence or absence of values in respect to the label-set of the graph. For a vertex $v$ we say an integer value $k$ is a (proper) \textit{terminal} for $v$ as a shortcut for the fact that there exists a vertex $w$, which is a (proper) terminal for $v$ and $\ell(w) = k$. We only deal with terminals of vertices from $\psi(2),...,\psi(6)$ whose all proper terminals are non-equivalent and thus have distinct labels. Hence the integer terminals and the vertex terminals are in one-to-one correspondence for these vertices. 

Let $G$ be a sum graph with a minimal labeling $\ell$. Let $M$ be the label-set of $G$ in respect to $\ell$ and let $v \in V(G)$ be such that $\ell(v) = 1$. According to Lemma \ref{SeqCover}, the vertex $v$ describes $M$ as a union of several distinct integer intervals separated by some values that are not elements of $M$. 

We say that an interval is \textit{long} if its first six labels and its last six labels do not intersect. In particular, its last six labels are strictly bigger than 6. The property of the long intervals we want to use is that for any $v$ such that $v \in \psi(i)$ for $i \leq 6$, any terminal among the last six elements of a long interval is always a proper terminal of $v$. Proofs of Lemma \ref{1noton2} and Lemma \ref{1noton3} are very similar, thus the proof of Lemma \ref{1noton2} is given in the Appendix.

\begin{lemma}\label{1noton2}
For any graph $G \in \mathcal{C}(A_n)$ such that $n \geq 39$, there is no labeling such that $\ell(v) = 1$ for any $v \in \psi(2)$. 
\end{lemma}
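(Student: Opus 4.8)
The plan is to prove Lemma~\ref{1noton2} by contradiction: assume some labeling $\ell$ assigns $\ell(v)=1$ to a vertex $v\in\psi(2)$, and derive an impossibility from the structural facts already established about $A_n$. The key leverage is the interplay between the $1$-cover induced by $v$ (which, since $\ell(v)=1$, is a description of $M$ as a union of integer intervals by Lemma~\ref{SeqCover}) and the covers induced by the other low-label vertices $v_3,\dots,v_6$. First I would invoke Lemma~\ref{TermStruct}(3): since the labeling is minimal with $\ell(v)=1$ and $v\in\psi(2)$, we get the bound $\ell(v_k)\le k$ for each $k\in\{2,\dots,6\}$, which together with injectivity-up-to-equivalence (Lemma~\ref{NonEq}) pins the small labels $\ell(v_2),\dots,\ell(v_6)$ to be essentially $2,3,\dots,6$ in some order. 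Combined with part~1 of that lemma, each $v_k$ has exactly $k$ proper terminals with distinct labels.

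The heart of the argument is a counting/parity clash coming from Lemma~\ref{2MerOne} applied with $d_1=\ell(v)=1$. Since $v$ is mergeable with each $v_k$ (by Lemma~\ref{TermStruct}(2) once $n\ge39$, as $2\le 3$ makes the mergeability hypothesis available), Lemma~\ref{2MerOne} forces $\ell(v_k)=d_k\le t_k-1$, where $t_k=k+1$ is the total terminal count from Lemma~\ref{TermStruct}(1). This already constrains the differences of the other covers relative to the small labels. I would then use the $1$-cover's interval structure: because $v$ has label~$1$, its terminals are exactly the gaps between the long intervals plus the single highest-label terminal (Lemma~\ref{TermStruct}(4)), so the number and positions of the ``missing'' values in $M$ are completely determined by $\tau(v)$. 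The contradiction is manufactured by comparing which values must be terminals of $v_2,\dots,v_6$ (controlled by parts~4,5,6 of Lemma~\ref{TermStruct}) against which values the $1$-cover is forced to omit from $M$.

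Concretely, I expect the decisive step to be an examination of the shared highest-label terminal and the nesting $\tau(v_k)\subset\tau(v_j)$ for $j\ge k+2$ from part~6. Since $\ell(v)=1$, every proper terminal $w$ of $v$ satisfies $1+\ell(w)\notin M$ by Lemma~\ref{SeqProper}, i.e.\ $\ell(w)+1$ is a gap. Running this against the explicit structure of $A_n$ (the label-set being essentially $\{2,\dots,n,n+2\}$ with possible raised multiplicities from the $\mathcal{C}$ construction) should show that the required gaps contradict the fact that $A_n$'s inducing set is an interval missing only the single value $n+1$. In particular, the long-interval property guarantees the last six terminals of any interval are genuine proper terminals, so the small generators $v_3,\dots,v_6$ impose a dense pattern of required sums $\ell(v_j)+\ell(v_k)\in M$ (part~7) that cannot coexist with the sparse gap pattern forced by $v$ having label~$1$.

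The main obstacle will be bookkeeping: carefully reconciling the fact that $G\in\mathcal{C}(A_n)$ may have several equivalent vertices (raised multiplicities) with the statement that the terminals of $v_2,\dots,v_6$ are non-equivalent and hence have distinct labels. I would handle this by working at the level of the label-set $M$ (treating terminals as integer values, using the extended ``integer terminal'' convention introduced just before the lemma) so that the equivalence classes collapse and the counting arguments from Lemmas~\ref{2MergeEq} and~\ref{2MerOne} apply cleanly. The remaining difficulty is ensuring the threshold $n\ge39$ is actually what the counting needs — this is exactly where the mergeability bound $2(t_i-1)(t_j-1)+3$ with $t_k=k+1$ for $k\le6$ must be checked against the number of non-equivalent vertices of $A_n$, and I would verify that $n\ge39$ suffices for all relevant pairs simultaneously.
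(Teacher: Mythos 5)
Your setup matches the paper's: argue by contradiction, view the $1$-cover of $v\in\psi(2)$ as a decomposition of $M$ into at most two non-trivial integer intervals (since $v$ has three terminals), and play this against the terminal structure of $v_3,\dots,v_6$ from Lemma~\ref{TermStruct}. However, the proposal never actually produces the contradiction, and the mechanism you gesture at is misdirected. You write that the forced gap pattern should ``contradict the fact that $A_n$'s inducing set is an interval missing only the single value $n+1$.'' That conflates the hypothetical label-set $M$ of the new labeling $\ell$ with the canonical inducing set $\{2,\dots,n,n+2\}$: the canonical set only determines the \emph{graph} (adjacencies, terminal counts, the nesting $\tau(v_k)\subset\tau(v_j)$), while $M$ itself is unknown and is exactly what must be pinned down. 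There is no direct comparison between the gaps of $M$ and the gap $n+1$ of the inducing set. The actual contradiction in the paper is arithmetic and local: one shows the two intervals are $X$ and $Y$ with $|Y|=2$ and a gap of size exactly $1$ between them (ruling out gap size $2$ and $|Y|=1$ by separate sub-arguments involving $\beta$ and $\delta$), deduces $\gamma=4$ from $k+\gamma$ having to hit the unique non-label $k+4$, propagates the terminal $k-3$ of $v_4$ to $v_6$ via part~6 of Lemma~\ref{TermStruct}, and concludes $\epsilon\geq 7$ against the bound $\epsilon\leq 6$ from Lemma~\ref{2MerOne}. None of this case analysis appears in your plan, and without it the ``dense pattern of required sums versus sparse gaps'' intuition does not close.

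A secondary inaccuracy: from $\ell(v_k)\leq k$ and distinctness (Lemma~\ref{NonEq}) you cannot conclude the small labels are ``essentially $2,3,\dots,6$ in some order''; assignments such as $\ell(v_3)=2,\ \ell(v_4)=3,\ \ell(v_5)=4,\ \ell(v_6)=5$ survive those constraints, and narrowing the values of $\beta,\gamma,\epsilon$ is precisely the work the interval analysis has to do. Your final paragraph on handling equivalent vertices via integer terminals and on verifying the threshold $n\geq 39$ is correct and matches the paper, but it addresses the bookkeeping rather than the missing core of the argument.
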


\begin{lemma}\label{1noton3}
For any graph $G \in \mathcal{C}(A_n)$ such that $n \geq 39$, there is no labeling such that $\ell(v) = 1$ for any $v \in \psi(3)$. 
\end{lemma}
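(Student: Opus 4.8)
The plan is to mirror the structure of the proof of Lemma~\ref{1noton2} (given in the Appendix) and adapt it to the vertex $v\in\psi(3)$. Suppose for contradiction that some $G\in\mathcal{C}(A_n)$ with $n\geq 39$ admits a labeling $\ell$ with $\ell(v)=1$ for a vertex $v\in\psi(3)$. Since $\ell(v)=1$, the $v$-cover is a $1$-cover, so by Lemma~\ref{SeqCover} the label-set $M$ is a union of integer intervals separated by missing values, and the number of these intervals is bounded by the number of terminals $t$ of $v$. By part~1 of Lemma~\ref{TermStruct}, a vertex in $\psi(3)$ has exactly $3$ proper terminals (hence $4$ terminals in total including the improper one), so $M$ decomposes into at most $4$ intervals of consecutive integers, with at least one forced to be the singleton $(1)$ by part~4 of Lemma~\ref{SeqCover}.

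First I would fix representatives $v_i\in\psi(i)$ for $i\in\{2,\dots,6\}$ as in Lemma~\ref{TermStruct} and invoke part~3 of that lemma, which under the hypothesis $\ell(v_3)=1$ forces $\ell(v_k)\leq k$ for each such $k$; combined with injectivity up to equivalence this pins down the small labels. Next I would use the mergeability of $v_3$ with the other low-label vertices (part~2 of Lemma~\ref{TermStruct}) together with Lemma~\ref{2MerOne}: since $d_{v_3}=\ell(v_3)=1$, any cover mergeable with the $v_3$-cover has its difference bounded by $t-1$. This gives tight numerical control over $\ell(v_2)$, $\ell(v_4)$, $\ell(v_5)$, $\ell(v_6)$ in terms of their terminal counts, which by part~1 are exactly $k+1$.

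The core of the argument is then a counting/placement contradiction using the long intervals. Because $n\geq 39$, at least one interval of the $1$-cover is long, so its last six elements are genuine proper terminals of any $v_i$ with $i\leq 6$ (this is exactly the property singled out before Lemma~\ref{1noton2}). I would track where the labels of the terminals of $v_3$ must sit relative to these intervals, using parts~4--7 of Lemma~\ref{TermStruct} to control the intersection pattern $\tau(v_3)\cap\tau(v_k)$, the shared highest label, and the sum condition $\ell(v_j)+\ell(v_k)\in M$. The constraints from the small labels and the $t-1$ difference bounds then overdetermine the endpoints of the intervals: the forced positions of the terminals of $v_3$ cannot simultaneously respect both the interval structure dictated by the $1$-cover and the membership conditions $\ell(v_j)+\ell(v_k)\in M$ for the pinned-down small values.

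The main obstacle I anticipate is the case analysis on how the at most four intervals are arranged and which terminals of $v_3$ land as interval endpoints versus interior points. The asymmetry between the proof for $\psi(2)$ and $\psi(3)$ comes precisely from the larger terminal count ($4$ versus $3$ total), so one extra interval is admissible, and I expect the bulk of the work to be ruling out the handful of additional configurations this creates — showing that each either violates a difference bound from Lemma~\ref{2MerOne}, contradicts the distinctness in part~1, or forces a required sum $\ell(v_j)+\ell(v_k)$ to fall into a gap of $M$. The authors' remark that the proofs of Lemma~\ref{1noton2} and Lemma~\ref{1noton3} are ``very similar'' suggests the same contradiction mechanism applies, with only this bookkeeping differing.
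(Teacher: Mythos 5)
Your high-level plan coincides with the paper's: take the $1$-cover induced by $v_3$, so that $M$ splits into at most four integer intervals, pin down the small labels using part~3 of Lemma~\ref{TermStruct} and Lemma~\ref{2MerOne}, and derive a contradiction from where the terminals of $v_2,v_5,v_6$ are forced to sit. The problem is that the proposal stops exactly where the proof has to begin. You assert that the constraints ``overdetermine the endpoints of the intervals'' and defer ``a handful of additional configurations,'' but you never exhibit a single concrete contradiction, and that deferred bookkeeping is the entire content of the lemma. The actual chain is short but each step is specific: (i) part~3 of Lemma~\ref{TermStruct} gives $\alpha=\ell(v_2)\le 2$, hence $\alpha=2$; (ii) the last label $k$ of $X$ or $Y$ (the intervals other than $\{1\}$, ordered $X<Y<Z$) is a terminal of $v_3$ but not the highest label, so by part~5 it is \emph{not} a terminal of $v_2$, forcing $k+2\in M$ and hence both gaps between $X,Y,Z$ to have size exactly $1$; (iii) since the last label $k$ of $X$ is also a terminal of $v_5$ and $v_6$ by part~6, there are three distinct non-labels strictly between $k$ and $k+7$, of which only two can lie in the gaps, which bounds $|Y|+|Z|\le 3$ and makes $X$ long; (iv) if $Y$ or $Z$ had length $2$ its first element would be a third proper terminal of $v_2$ (on top of the one in $X$ and the highest label), contradicting $\alpha=2$, so $Y=\{k+2\}$ and $Z=\{k+4\}$; (v) now both $k$ and $k-1$ are terminals of $v_5$, so $k+\delta$ and $k-1+\delta$ must both land in the non-label set $\{k+1,k+3,k+5\}$, which contains no two consecutive integers --- the contradiction. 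Nothing in your proposal produces step (v) or any substitute for it.

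Two smaller inaccuracies: part~4 of Lemma~\ref{SeqCover} only forces a singleton sequence $(g)$ when the number of sequences actually equals the number of non-equivalent terminals, so you cannot assert unconditionally that $\{1\}$ is one of the intervals (the paper only says ``possible trivial interval $\{1\}$''); and the heavy machinery you invoke around Lemma~\ref{2MerOne} for $v_4,v_5,v_6$ is not what the paper uses here --- the only numerical input needed from that direction is $\alpha\le 2$, after which everything is driven by the terminal-containment relations of Lemma~\ref{TermStruct}.
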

\begin{proof}
Let us fix arbitrary vertices $v_i$ such that $v_i \in \psi(i)$ for values of $i$ from 2 to 6 and let us denote their labels as $\alpha := \ell(v_2), \beta := \ell(v_3), \gamma := \ell(v_4), \delta := \ell(v_5), \epsilon := \ell(v_6)$. For contradiction let $\beta = 1$. We use the observations from the previous Lemma \ref{TermStruct} and Lemma \ref{2MerOne} to reach contradiction. 

As $v_3$ has four terminals, $M$ composes of at most three non-trivial intervals with possible trivial interval $\{1\}$, let $X,Y,Z$ denote the intervals other than $\{1\}$ so that $X < Y < Z$. Let $k$ be the last label of $X$ or $Y$. 

As $v_2$ has only two proper terminals, $\alpha \leq 2$ and thus $\alpha = 2$. Since $k$ is a terminal for $v_3$, it is not a terminal for $v_2$. Together, we get that $k+2$ is a label and so both gaps between the intervals $X,Y,Z$ have size exactly 1. 

Let us focus on the interval $X$ and let $k$ from now on denote its last label. From $k$ being a terminal for $v_3$ we have that $k$ is also a terminal for $v_5$ and $v_6$. Thus there are at least three distinct non-labels strictly between $k$ and $k+7$, and so the third closest non-label following $k$ is at most $k+6$. As there are at most two of them in the gaps separating the three intervals, we have that the sum of lengths of $Y$ and $Z$ is at most 3 (together with 3 non-labels summing up to 6). From this we get that $X$ is long. 

Since $\alpha = 2$, there is one terminal for $v_2$ in $X$. If $Y$ or $Z$ has length two, then the first element is also a terminal for $v_2$. Together with the highest label in $M$ we would reach a contradiction with $v_2$ having only two proper terminals. 

We have that $Y = \{k+2\}$ and $Z = \{k+4\}$. The label $k$ is a terminal for $v_3$ and label $k-1$ is a terminal for $v_2$. Both are also terminals for $v_5$. Therefore, we need to set $\delta$ so that both $k+\delta$ and $k-1+\delta$ fall into $\{k+1,k+3,k+5\}$. But there is no such value and we reach a contradiction. 
\end{proof}

\subsection{Smallest Labels of $\mathcal{C}(A_n)$}

In this section we give limitations on ordering of labels in general case. We apply this together with our previously acquired knowledge to further limit the position of label 1 in graphs from $\mathcal{C}(A_n)$. 

\begin{lemma}\label{LabSmallK}
Let $G$ be a sum graph, let $v_1,v_2$ be equivalent vertices of $G$ with $k$ terminals. Let $\ell$ be any labeling of $G$ and $M$ the label-set associated with $\ell$. Then $\ell(v_1)$ is one of the $k$ smallest labels in $M$. Furthermore, if $\ell(v_1) = \ell(v_2)$ then $\ell(v_1)$ is one of the $k-1$ smallest labels in $M$. 
\end{lemma}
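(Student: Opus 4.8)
The plan is to fix an arbitrary labeling $\ell$, set $g := \ell(v_1)$, and apply Lemma~\ref{SeqCover} to the generator $v_1$, which has $k$ terminals: the label-set $M$ is covered by at most $k$ arithmetic sequences of common difference $g$. The one structural observation I would lean on throughout is that, within a $g$-sequence, consecutive terms differ by $g$, so a sequence meets the window $\{1,\dots,g-1\}$ if and only if its first (smallest) term lies there, in which case it contributes \emph{exactly one} label below $g$. First I would locate the value $g$ itself: since $g\in M$, it lies in some sequence $S^{*}$, and a $g$-sequence containing $g$ must have first term $g$ (any earlier term would be $\le 0$). Hence $S^{*}$ contributes no label below $g$, so at most (number of sequences $-\,1$) $\le k-1$ labels are strictly smaller than $g$, and $\ell(v_1)=g$ is among the $k$ smallest labels in $M$. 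This settles the first assertion.

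For the furthermore, assume $\ell(v_1)=\ell(v_2)=g$; the goal is to sharpen the count to at most $k-2$ labels below $g$, i.e.\ to exhibit a \emph{second} sequence with no term below $g$, or to show the cover uses at most $k-1$ sequences. I would split on whether the edge $v_1v_2$ is present, which (as $v_1\neq v_2$) is equivalent to $2g\in M$. In the case $2g\notin M$ the edge $v_1v_2$ is absent, so both $v_1$ and $v_2$ are terminals of $v_1$ (with $v_1$ its own improper terminal), and they carry the same label $g$. Since every sequence ends at a terminal (Lemma~\ref{SeqCover}) and distinct sequences end at distinct terminal labels, while equally-labelled terminals are forced to be equivalent (Lemma~\ref{NonEq}), the two terminals sharing label $g$ reduce the number of distinct terminal labels, hence the number of sequences, to at most $k-1$; together with the term $g$ in $S^{*}$ this yields at most $k-2$ labels below $g$.

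The delicate case, which I expect to be the main obstacle, is $2g\in M$: now $v_1v_2\in E$, so $v_2$ is \emph{not} a terminal of $v_1$ and the repeated-label argument above breaks down. Here I would instead exploit the vertex labelled $2g$. The sequence containing $2g$ has first term a positive multiple of $g$, hence at least $g$, so it contributes no label below $g$; if this sequence is distinct from $S^{*}$ it is a second wasted sequence and we again obtain at most $k-2$ labels below $g$. If instead $2g$ lies in $S^{*}$ itself, then $S^{*}=(g,2g,\dots)$ is non-singleton and ends at a proper terminal, so the improper terminal $v_1$ (label $g$) is absorbed into $S^{*}$ and needs no dedicated singleton $(g)$; the sequences then correspond to distinct proper-terminal labels, of which there are at most $k-1$, again giving the bound. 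The technical heart is precisely this last bookkeeping — guaranteeing that the label $g$ cannot simultaneously demand its own singleton sequence and block the second wasted sequence — while everything else reduces to the uniform principle that every sequence whose terms are multiples of $g$ starts no lower than $g$.
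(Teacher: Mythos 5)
Your proof is correct, but it takes a genuinely different route from the paper's. The paper argues directly on edge-numbers: since $v_1$ has $k$ terminals, at least $|M|-k$ distinct labels appear on neighbours of $v_1$, and each such label $m$ forces $\ell(v_1)+m\in M$; these sums are pairwise distinct and all exceed $\ell(v_1)$, so at most $k-1$ elements of $M$ can be smaller than $\ell(v_1)$ --- the first claim in three lines. For the ``furthermore'' the paper simply notes that the label $\ell(v_1)$ itself appears on the neighbour $v_2$, which improves the count of labels on neighbours by one. You reach the same bounds by passing through the cover of Lemma~\ref{SeqCover} and observing that each $g$-sequence contains at most one label below $g$ while the sequence through $g$ contains none; this is a dual packaging of the same counting, but it costs you the case analysis on whether $2g\in M$ in the second part. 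What the detour buys is completeness: the paper's ``furthermore'' tacitly assumes $v_1v_2\in E(G)$ (true in the intended application, where the vertices of $\psi(2)$ and $\psi(3)$ in $\mathcal{C}(A_n)$ form cliques, but not part of the hypotheses of the lemma), whereas your $2g\notin M$ branch handles the non-adjacent case via the repeated terminal label. One small caveat: your steps ``distinct sequences end at distinct terminal labels'' and ``the number of sequences is at most the number of distinct terminal labels'' rely on the pairwise disjointness of the sequences, which is established inside the proof of Lemma~\ref{SeqCover} but is not part of its statement; you should say explicitly that you are using that property of the construction.
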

\begin{proof}
Consider $v_1$ and let us count the number of labels from $M$ not appearing on any neighbor of $v_1$. The vertex $v_1$ has exactly $k$ terminals including itself. Therefore there are at least $|M| - k$ labels appearing on the neighbors of $v_1$ and so the edges incident with $v_1$ carry at least $|M| - k$ distinct edge-numbers expressed as a sum of $\ell(v_1)$ and a positive label of one of the neighbors. Together we get that $M$ contains at least $|M| - k$ values strictly greater than $\ell(v_1)$. 

Assume that $\ell(v_1) = \ell(v_2)$. We can improve the previous argument by the fact that the label of $v_1$ is present on a neighbor of $v_1$ (namely $v_2$). This improves the bound on labels in $M$ strictly greater than $\ell(v_1)$ to at least $|M| - (k-1)$. 
\end{proof}

Consider graph $A_n$ for $n \geq 39$ with some labeling $\ell$, let $\alpha$ denote a label of some vertex from $\psi(2)$ and let $\beta$ denote a label of some vertex from $\psi(3)$. Recall Lemma \ref{TermStruct} giving explicit amounts of terminals of vertices from $\psi(2)$ and $\psi(3)$. As a corollary of the previous Lemma \ref{LabSmallK} we have that $\alpha$ is among the 3 smallest labels from $M$ and $\beta$ is among the 4 smallest labels from $M$. And as we already know from Lemma \ref{1noton2} and Lemma \ref{1noton3}, $\alpha,\beta \neq 1$. 

Recall Lemma \ref{2MergeEq}, which gives us limitations on mutual relations in between labels of mergeable vertices. As we know that $\psi(2)$ are mergeable with $\psi(3)$, we know that one of the following must hold: 
$\alpha = 2 \beta$, $2 \alpha = \beta$, $3 \alpha = \beta$ or $3 \alpha = 2 \beta$.

We use these facts to show that Lemma \ref{LabSmallK} can be applied to vertices from $\psi(2)$ and $\psi(3)$ in its stronger form, thus fully determining the smallest three labels in minimal labelings of graphs from $\mathcal{C}(A_n)$. 

\begin{lemma}\label{Lab2Unique}
In any minimal labeling $\ell$ of a graph $G \in \mathcal{C}(A_n)$ such that $n \geq 39$, $\ell(v_1) = \ell(v_2)$ for any $v_1,v_2 \in \psi(2)$. 
\end{lemma}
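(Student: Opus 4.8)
The plan is to order the label-set as $1 = m_1 < m_2 < m_3 < \cdots$ and show that every $v\in\psi(2)$ is pinned to $\{m_2,m_3\}$, with $m_3$ essentially forbidden. First I would recall that every vertex of $\psi(2)$ has exactly three terminals (Lemma~\ref{TermStruct}), so by Lemma~\ref{LabSmallK} its label is among the three smallest, and by Lemma~\ref{1noton2} it is not $1$; hence it lies in $\{m_2,m_3\}$. The \emph{furthermore} clause of Lemma~\ref{LabSmallK} says that two equivalent vertices sharing a label have that label among the two smallest; since $m_3>m_2>1$, at most one vertex of $\psi(2)$ can carry $m_3$. Assuming for contradiction that the labels are not all equal, there is a $v\in\psi(2)$ with $\ell(v)=m_2$ and exactly one $v'\in\psi(2)$ with $\ell(v')=m_3$. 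As $\psi(2)$ is the clique replacing a looped vertex in the construction of $\mathcal{C}(A_n)$, its vertices are mutually equivalent, so $v$ and $v'$ have identical external neighborhoods, and by Lemma~\ref{TermStruct} each has exactly two proper terminals, one being the unique vertex of maximum label $H := \max M$.

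The crux is a short extremal argument at the top of $M$. Let $H^-$ be the second-largest label; since the covers of $\psi(2)$ and $\psi(3)$ are mergeable (Lemma~\ref{TermStruct}) we have $|M|\ge 15$, so $H^-$ is a genuine label distinct from $m_2,m_3$. I claim the $H^-$-vertex must be a proper terminal of $v$. Indeed, otherwise it is an external non-terminal, hence a neighbor of both $v$ and $v'$; then $m_2+H^-$ and $m_3+H^-$ both lie in $M$, but each exceeds $H^-$ and the only label above $H^-$ is $H$, forcing $m_2+H^- = H = m_3+H^-$ and thus $m_2=m_3$, a contradiction. Therefore the two proper terminals of $v$ (equivalently of $v'$) are exactly the two largest labels $H^-$ and $H$.

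It remains to rule out this last configuration, which I expect to be the main obstacle. Since both proper terminals are now large, the label-$1$ vertex is a neighbor of both $v$ and $v'$, so $m_2+1\in M$ and $m_3+1\in M$; because no label lies strictly between consecutive $m_i$, this forces $m_3=m_2+1$ and $m_4=m_3+1$, i.e. $m_2,m_3,m_4$ are three consecutive integers. Finally I would invoke cover merging between $\psi(2)$ and $\psi(3)$ (mergeability from Lemma~\ref{TermStruct}, the equation from Lemma~\ref{2MergeEq}): writing $\beta:=\ell(v_3)$, one first checks $\beta=m_4$ — it is among the four smallest labels, it is not $1$ by Lemma~\ref{1noton3}, and it cannot equal $m_2$ or $m_3$, since a vertex of $\psi(3)$ sharing a label with one of $\psi(2)$ would be equivalent to it despite having a different number of terminals. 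Lemma~\ref{2MergeEq} then forces both $m_2$ and $m_3$ into $\{\beta/3,\beta/2,2\beta/3,2\beta\}$, and a one-line check shows that $\beta-2$ and $\beta-1$ cannot both belong to this set for $m_2\ge 2$, giving the final contradiction.

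The delicate point throughout is the escape case of the extremal argument: once $m_2=m_3$ is ruled out, the only survivor is the rigid configuration in which the two proper terminals of the $\psi(2)$-clique are the two top labels, and one must then extract enough arithmetic from the minimal labeling (through the label-$1$ neighbor, which forces consecutiveness) and from cover merging to contradict it. The extremal $H^-$ trick is what makes the generic situation collapse immediately; the remaining bookkeeping establishing $\beta=m_4$ and the membership check is routine but must be carried out carefully.
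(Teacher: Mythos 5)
Your proof is correct, but it takes a genuinely different route from the paper's. The paper opens by merging the covers of the two $\psi(2)$-vertices themselves: Lemma~\ref{2MergeEq} with $t-1=2$ on both sides forces $\alpha_2=2\alpha_1$, then merging with $\psi(3)$ pins down $\alpha_1=\beta/3$, $\alpha_2=2\beta/3$, and the contradiction comes from cover structure: in the $\alpha_2$-cover the two proper sequences must start at $1$ and $\alpha_1$, so $\{\alpha_2\}$ is a singleton sequence and $2\alpha_2\notin M$, contradicting $\alpha_1+\beta=4\alpha_1=2\alpha_2\in M$ from point~7 of Lemma~\ref{TermStruct}. You instead argue extremally at the top of $M$: the second-largest label must be a proper terminal of the $\psi(2)$-clique (else both $m_2+H^-$ and $m_3+H^-$ would have to equal $H$), so both proper terminals sit at the top, the label-$1$ vertex is a neighbour, and consecutiveness $m_3=m_2+1$, $m_4=m_3+1$ follows; the ratio menu from Lemma~\ref{2MergeEq} against $\beta=m_4$ then kills $m_3=\beta-1$. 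Both arguments check out (your identification $\beta=m_4$ via non-equivalence of $\psi(2)$- and $\psi(3)$-vertices is sound, since equal labels would force equal degrees). What your route buys is a reusable structural fact --- in the unequal-labels scenario the two proper terminals of $\psi(2)$ are the two largest labels --- and it avoids the singleton-sequence analysis; what it costs is length, and in fact your own endgame can be shortened considerably: the paper's opening relation $m_3=2m_2$ (from merging the two $\psi(2)$-covers with each other) combined with your $m_3=m_2+1$ gives $m_2=1$ immediately, contradicting $m_2>m_1=1$, so the entire detour through $\psi(3)$ and $\beta=m_4$ is unnecessary.
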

\begin{proof}
For contradiction, let $v_1,v_2 \in \psi(2)$ have distinct labels $\alpha_1, \alpha_2$. Without loss of generality, $\alpha_1 < \alpha_2$. We use the observations from Lemma \ref{TermStruct}. As $v_1$ is mergeable with $v_2$ and each has three terminals, Lemma \ref{2MergeEq} implies that necessarily $2\alpha_1 = \alpha_2$. 

Let $\beta$ denote a label of any vertex from $\psi(3)$. From mergeability, both values $\alpha_1,\alpha_2$ relate to $\beta$, from Lemma \ref{2MergeEq}. The only values for any of the two alphas are $2\beta, \frac{1}{2}\beta, \frac{1}{3}\beta,\frac{2}{3}\beta$. Since $2\alpha_1 = \alpha_2$, the only two suitable values are $\alpha_1 = \frac{1}{3}\beta$ and thus $\alpha_2 = \frac{2}{3}\beta$. 

Consider the $\alpha_2$-cover of the label-set $M$. Since both $\alpha_1$ and $\alpha_2$ are among the three smallest elements of $M$, according to Lemma \ref{LabSmallK}, we know that the two proper $\alpha_2$-sequences start with elements $1$ and $\alpha_1$, as these are smaller than $\alpha_2$. The value $\alpha_2$ is not an element of either of the two sequences, thus $\{\alpha_2\}$ forms an improper sequence and consequently $2 \alpha_2$ is not a label. From the last point of Lemma \ref{TermStruct}, the value $\alpha_1 + \beta = 4 \alpha_1 = 2 \alpha_2$ is a label and we reach a contradiction. 
\end{proof}

\begin{corollary}\label{AlpLeqBet}
For any minimal labeling $\ell$ of $A_n$, $\ell(v) < \ell(w)$ for any $v \in \psi(2)$ and $w \in \psi(3)$. 
\end{corollary}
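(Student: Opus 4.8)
The plan is to exploit the four possible relations between $\alpha=\ell(v)$ and $\beta=\ell(w)$ (for $v\in\psi(2)$, $w\in\psi(3)$) that were derived from mergeability just before Lemma \ref{Lab2Unique}, namely $\alpha=2\beta$, $2\alpha=\beta$, $3\alpha=\beta$, or $3\alpha=2\beta$. Three of these — $2\alpha=\beta$, $3\alpha=\beta$, and $3\alpha=2\beta$ — immediately yield $\beta>\alpha$, i.e.\ the desired inequality $\ell(v)<\ell(w)$. Thus the whole corollary collapses to ruling out the single remaining relation $\alpha=2\beta$, in which $\beta=\alpha/2<\alpha$.

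To discard $\alpha=2\beta$ I would first pin down the exact position of $\alpha$ among the smallest labels of $M$. By Lemma \ref{Lab2Unique} all vertices of $\psi(2)$ carry the common label $\alpha$, and since such a graph contains at least two vertices in $\psi(2)$ (the vertex of $A_n$ labelled $2$ carries a loop and is replaced by a clique under the construction $\mathcal{C}$), these are equivalent vertices with equal labels. Hence the stronger conclusion of Lemma \ref{LabSmallK} applies: with three terminals each, $\alpha$ is one of the $3-1=2$ smallest labels of $M$. Because the labeling is minimal its smallest label is $1$, and Lemma \ref{1noton2} gives $\alpha\neq 1$; therefore $\alpha$ is exactly the second-smallest label.

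Now assume for contradiction that $\alpha=2\beta$. Then $\beta<\alpha$, so $\beta$ is strictly smaller than the second-smallest label, which forces $\beta$ to be the smallest label and hence $\beta=1$. This contradicts Lemma \ref{1noton3}, which forbids the label $1$ on any vertex of $\psi(3)$. Consequently $\alpha=2\beta$ cannot occur, and one of the three remaining relations must hold, each of which gives $\alpha<\beta$ as required.

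The only delicate point is guaranteeing that the stronger form of Lemma \ref{LabSmallK} genuinely applies, which rests on two facts that are already available: that $\psi(2)$ contains at least two vertices, and that they share a label (Lemma \ref{Lab2Unique}). Once $\alpha$ is fixed as the second-smallest label, the case analysis over the four mergeability relations is purely arithmetical and leaves no further computation.
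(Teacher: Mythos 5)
Your proof is correct and rests on the same key step as the paper's: Lemma \ref{Lab2Unique} activates the stronger form of Lemma \ref{LabSmallK}, which together with Lemma \ref{1noton2} pins $\alpha$ as exactly the second-smallest label, and Lemma \ref{1noton3} then forces $\beta>\alpha$. The detour through the four mergeability relations is harmless but unnecessary — once $\alpha$ is the second-smallest label, $\beta\neq 1$ and $\beta\neq\alpha$ already give the conclusion directly, which is all the paper's one-line proof does.
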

\begin{proof}
The previous Lemma \ref{Lab2Unique} guarantees the additional condition of Lemma \ref{LabSmallK} on labels of vertices from $\psi(2)$, thus $\ell(v)$ is at most second smallest label in $A_n$. As 1 is also a label, $\ell(v)$ is exactly the second smallest and thus $1 \neq \ell(w) > \ell(v)$. 
\end{proof}

\begin{lemma}\label{Lab3Unique}
In any minimal labeling $\ell$ of a graph $G \in \mathcal{C}(A_n)$ such that $n \geq 39$, $\ell(v_1) = \ell(v_2)$ for any $v_1,v_2 \in \psi(3)$. 
\end{lemma}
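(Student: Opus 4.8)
The plan is to mirror the proof of Lemma~\ref{Lab2Unique}, now exploiting that the two smallest labels are already pinned down. Suppose for contradiction that two vertices $v_1,v_2\in\psi(3)$ carry distinct labels $\beta_1<\beta_2$. By Lemma~\ref{Lab2Unique} and Corollary~\ref{AlpLeqBet} there is a unique value $\alpha=\ell(u)$ common to all $u\in\psi(2)$, and $\alpha$ is the second smallest label (with $1$ the smallest). First I would extract the admissible pairs $(\beta_1,\beta_2)$: applying Lemma~\ref{2MergeEq} to the two mergeable $\psi(3)$-vertices (each with $t=4$) gives $\beta_2/\beta_1\in\{3/2,2,3\}$, while applying it to $\psi(2)$ against $\psi(3)$ forces each $\beta_i\in\{\frac{3}{2}\alpha,2\alpha,3\alpha\}$. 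Intersecting these constraints leaves exactly $(\beta_1,\beta_2)\in\{(\frac{3}{2}\alpha,3\alpha),(2\alpha,3\alpha)\}$.

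The pair $(\frac{3}{2}\alpha,3\alpha)$ I would rule out by an overcrowding count. Since $\psi(2)$ is a clique on at least two vertices all labelled $\alpha$, a clique edge guarantees $2\alpha\in M$; hence $1<\alpha<\frac{3}{2}\alpha<2\alpha<3\alpha$ are four distinct labels strictly below $\beta_2=3\alpha$. But $v_2\in\psi(3)$ has four terminals, so by Lemma~\ref{LabSmallK} its label $3\alpha$ must be among the four smallest labels, a contradiction.

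For the surviving pair $(2\alpha,3\alpha)$ I would argue through covers. The $3\alpha$-cover taken from a vertex $w$ labelled $3\alpha$ must use the four distinct residues of $1,\alpha,2\alpha,3\alpha$ modulo $3\alpha$, so it has exactly four sequences; since $w$ has four non-equivalent terminals, Lemma~\ref{SeqCover}(4) forces the singleton $(3\alpha)$, whence $6\alpha\notin M$. If two vertices of $\psi(3)$ were labelled $3\alpha$, the clique edge between them would guarantee $6\alpha\in M$, a contradiction, so I may assume $w$ is the unique $3\alpha$-vertex and pick some $x\in\psi(3)$ labelled $2\alpha$. The key structural fact is that $x$ and $w$, lying in the same replacement clique, are equivalent, so $\tau(x)=\tau(w)$: their proper terminals coincide as vertices, hence as label sets. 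I would read these common terminal labels off both covers: from the $2\alpha$-cover (three sequences, its residue-$0$ sequence being $(2\alpha,4\alpha)$ since $6\alpha\notin M$) one terminal is $4\alpha$ and the other two are $\equiv 1,\alpha\pmod{2\alpha}$; from the $3\alpha$-cover the three are $\equiv 1,\alpha,2\alpha\pmod{3\alpha}$.

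To close the argument I would pin down $M$ along multiples of $\alpha$. Using the $\alpha$-cover of a $\psi(2)$-vertex, whose improper terminal $\alpha$ is interior and which has only two proper terminals, $M$ splits into exactly two $\alpha$-sequences; because $6\alpha\notin M$ the residue-$0$ one is precisely $(\alpha,2\alpha,3\alpha,4\alpha,5\alpha)$, so the only multiples of $\alpha$ in $M$ are $\alpha,\dots,5\alpha$, and its last element $5\alpha$ is a proper terminal of $\psi(2)$. Combining the congruences above (the shared $\psi(3)$-terminal that is $\equiv\alpha\pmod{2\alpha}$ is a multiple of $\alpha$ and is $\equiv 2\alpha\pmod{3\alpha}$) with the fact that the only such multiple of $\alpha$ present in $M$ is $5\alpha$ forces that terminal to equal $5\alpha$. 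Thus $5\alpha$ is a proper terminal of both $\psi(2)$ and $\psi(3)$ while not being the highest label, contradicting Lemma~\ref{TermStruct}(5). I expect the main obstacle to be exactly this last case of a single $3\alpha$-vertex: everything hinges on the observation that two differently labelled vertices of the same clique still share all their proper terminals, and on juggling the three covers with moduli $\alpha,2\alpha,3\alpha$ simultaneously to force the forbidden common terminal $5\alpha$.
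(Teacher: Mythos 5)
Your proof is correct, and its skeleton matches the paper's: both reduce via Lemma~\ref{2MergeEq} and Corollary~\ref{AlpLeqBet} to the two candidate pairs $(\frac{3}{2}\alpha,3\alpha)$ and $(2\alpha,3\alpha)$, both kill the first pair by exhibiting a fourth label strictly below $3\alpha$ against Lemma~\ref{LabSmallK} (the paper uses $\alpha+\beta_1=\frac{5}{2}\alpha$ via Lemma~\ref{TermStruct}(7); you use $2\alpha$ via the clique edge inside $\psi(2)$ --- equally valid), and both end the second pair in a violation of Lemma~\ref{TermStruct}(5). Where you genuinely diverge is in how the pair $(2\alpha,3\alpha)$ is closed. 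The paper works along the residue-$1$ class: the $3\alpha$-cover forces every non-multiple of $\alpha$ to have the form $1+3i\alpha$, the $\alpha$-cover would force $1+\alpha\in M$ if there were more than one such label, so $1$ is the only one and hence a common terminal of $\psi(2)$ and $\psi(3)$. You instead work along the multiples of $\alpha$: the forced singleton $(3\alpha)$ in the $3\alpha$-cover gives $6\alpha\notin M$, the $\alpha$-cover then pins the multiples of $\alpha$ in $M$ down to exactly $\alpha,\dots,5\alpha$, and both the $2\alpha$-cover and the $\alpha$-cover must terminate a sequence at $5\alpha$, making $5\alpha$ the forbidden common proper terminal. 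Your route is somewhat longer and leans on the additional observations that $\psi(2)$ and $\psi(3)$ are cliques of size at least two carrying a constant label (giving $2\alpha,5\alpha\in M$ and ruling out two $\psi(3)$-vertices labelled $3\alpha$), whereas the paper gets by with Lemma~\ref{TermStruct}(7) alone; in exchange your case analysis is a little more concrete. Every step checks out; the only points you gloss over are that the $2\alpha$-cover might a priori consist of four sequences rather than three (this does not change which labels end sequences, since the extra sequence would be the improper singleton $(2\alpha)$), and that $5\alpha$ is not the highest label (immediate, since $M$ has at least $39$ distinct values of which only five are multiples of $\alpha$).
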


The proof of Lemma \ref{Lab3Unique} is analogous to the proof of Lemma \ref{Lab2Unique} and is given in the Appendix. We are now ready to prove the last limitation on the placement of label 1 in minimal labelings of graphs from $\mathcal{C}(A_n)$ in order to exclude all possible minimal labelings. 

\begin{lemma}\label{1nothigh}
In any labeling $\ell$ of a graph $G \in \mathcal{C}(A_n)$ such that $n \geq 39$, if $v \in \psi(i)$ and $\ell(v) = 1$, then $i \leq 3$. 
\end{lemma}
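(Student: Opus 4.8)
The plan is to argue by contradiction: assume some labeling $\ell$ has $\ell(v)=1$ with $v\in\psi(i)$ and $i\ge 4$, and derive a contradiction. Since $1$ appears, $\ell$ is minimal, so I may freely use all the structure already established for minimal labelings. By Lemmas~\ref{Lab2Unique}, \ref{Lab3Unique} and Corollary~\ref{AlpLeqBet}, every vertex of $\psi(2)$ carries the same value $\alpha$, which is the second smallest label, every vertex of $\psi(3)$ carries the same value $\beta$, which is the third smallest, and $1<\alpha<\beta$. Moreover, since $\psi(2)$ and $\psi(3)$ are mergeable (Lemma~\ref{TermStruct}) with $t_2-1=2$ and $t_3-1=3$, Lemma~\ref{2MergeEq} restricts the ratio of $\alpha$ and $\beta$; using $\alpha<\beta$ this leaves only $\beta=\tfrac32\alpha$, $\beta=2\alpha$ or $\beta=3\alpha$.

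The clean core of the argument is to pin down $\alpha$ and $\beta$. For every index $i$ with $i+2\in\{2,\dots,n,n+2\}$ -- that is, for all $i$ except $i\in\{n-1,n+2\}$ -- the classes $\psi(i)$ and $\psi(2)$ are adjacent in $G$, so the edge between $v$ and a vertex of $\psi(2)$ is guaranteed by a vertex labeled $\ell(v)+\alpha=\alpha+1$; hence $\alpha+1\in M$. As $\alpha$ is the second smallest label and $\alpha+1>\alpha$, the value $\alpha+1$ is ranked at least third, so $\beta\le\alpha+1$. Together with $\beta>\alpha$ this forces $\beta=\alpha+1$, and combining with the three admissible ratios above yields $\alpha=2$, $\beta=3$ (the other two ratios give $\alpha\le 1$). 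Thus, for every non-exceptional $i$, a minimal labeling must place $2$ on $\psi(2)$ and $3$ on $\psi(3)$. The two exceptional indices $i\in\{n-1,n+2\}$ I would treat separately and quickly: the vertex $n+2$ is isolated and $n-1$ has degree~$1$ in $A_n$, so labeling such a vertex $1$ forces $x+1\notin M$ for almost every label $x\ne 1$, making $M$ very sparse. This contradicts the fact that $\psi(2)$ has exactly two proper terminals, i.e. that only two values $x$ satisfy $x+2\notin M$ (a graph invariant, by Lemma~\ref{TermStruct}), since a sparse $M$ produces far more such values.

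It remains to rule out the main regime $\alpha=2$, $\beta=3$, and this is where I expect the real work to be. First, the pairwise edge-numbers among $v$, $\psi(2)$ and $\psi(3)$, together with Lemma~\ref{TermStruct}(7), give $\{1,2,3,4,5,6\}\subseteq M$. Now I would run three cover descriptions of the same set $M$ simultaneously. The difference-$1$ cover generated by $v$ writes $M$ as exactly $i$ maximal integer intervals, so there are $i$ tops $x$ with $x+1\notin M$; the difference-$2$ cover of $\psi(2)$ uses at most three sequences, and $\psi(2)$ has exactly two proper terminals, i.e. only two values $x$ with $x+2\notin M$; the difference-$3$ cover of $\psi(3)$ uses at most four sequences, and $\psi(3)$ has exactly three proper terminals. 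Using Lemma~\ref{TermStruct}(5), which says no value other than the global maximum is a terminal of both $\psi(2)$ and $\psi(3)$, I would first show that every interior gap of $M$ has size $1$ or $2$, with at most one gap of size~$2$.

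Hence, for $i\ge 4$, the set $M$ is an almost-full interval punctured by at least three essentially single missing values, and the contradiction comes from a squeeze. If these punctures lie close together, then more than three values $x$ satisfy $x+3\notin M$, contradicting the terminal count of $\psi(3)$; if they lie spread out, then the odd (or even) part of $M$ splits into more than the three available difference-$2$ sequences of $\psi(2)$. Showing that no placement of the $i\ge 4$ gaps can avoid both horns at once is the crux of the proof and the step I expect to demand the most careful bookkeeping; it is precisely the higher-$i$ analogue of the interval analysis carried out in the proof of Lemma~\ref{1noton3}.
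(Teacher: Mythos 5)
Your setup is sound and your first move coincides with the paper's: splitting on whether the vertex labeled $1$ is adjacent to $\psi(2)$ (your ``non-exceptional $i$'' is exactly the paper's case that $1$ is not a terminal of $\psi(2)$), and in that case deriving $\alpha+1\in M$, hence $\beta=\alpha+1$, hence $\alpha=2$, $\beta=3$ via the mergeability ratios. But the proof is not complete, and both remaining pieces have real problems. First, the crux --- ruling out $\alpha=2,\beta=3$ --- is only a sketched ``squeeze'' between gap counts, and you explicitly defer the bookkeeping. The paper closes this case with a short, specific argument that your sketch does not contain: since the vertex labeled $1$ lies in $\psi(i)$ with $i\ge 4$, every proper terminal $k$ of $\psi(2)$ is also a terminal of that vertex (the containment of terminal sets from Lemma~\ref{TermStruct}(6)), so $k+1\notin M$; this forbids either proper $2$-sequence from extending past the end of the other, so the two proper $2$-sequences cannot overlap, yet a $3$-sequence of length $\ge 8$ forces its consecutive elements into alternating $2$-sequences and hence forces such an overlap. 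Without some use of $i\ge 4$ of this kind, a pure gap-counting squeeze is delicate (singleton intervals can evade the difference-$2$ count), which is precisely why you could not finish it.

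Second, your dismissal of the exceptional indices $i\in\{n-1,n+2\}$ is wrong as stated. From ``$x+1\notin M$ for almost every label $x$'' you conclude that $M$ is too sparse to have only two values $x$ with $x+\alpha\notin M$; but for $\alpha\ge 2$ a set with no two consecutive elements can perfectly well be closed under adding $\alpha$ except at its maximum (e.g.\ $M$ essentially an arithmetic progression of difference $2$), so no contradiction with the terminal count of $\psi(2)$ follows. The paper handles this case differently and correctly: when $1$ is a proper terminal of $\psi(2)$, the $\alpha$-cover has a single non-trivial sequence; $\beta$ and $\beta+\alpha$ lie in it by Lemma~\ref{TermStruct}(7), and since $1$ cannot also be a terminal of $\psi(3)$ (only the maximum label is shared), $\beta+1\in M$ as well, forcing two elements of a difference-$\alpha$ sequence to differ by $1$ and hence $\alpha=1$, a contradiction. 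You would need to replace both of these steps with actual arguments before the proof stands.
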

\begin{proof}
Let $\alpha$ be a label of a vertex $v \in \psi(2)$ and let $\beta$ be a label of a vertex $w \in \psi(3)$. From Lemma \ref{Lab2Unique} and Lemma \ref{Lab3Unique}, we know that these values are uniquely determined by $\ell$. For contradiction, let $\alpha,\beta > 1$. Let $M$ denote the label-set of $G$. From the previous Lemma \ref{Lab2Unique} and Lemma \ref{Lab3Unique}, we have the additional conditions to apply Lemma \ref{LabSmallK} to $v$ and $w$ in its stronger form. Together we have that $\alpha < \beta$ (corollary of Lemma \ref{Lab2Unique}) and labels $1,\alpha,\beta$ are the three smallest labels in $M$. 

Let $x$ be a vertex such that $\ell(x) = 1$. Suppose $x$ is not a terminal for $v$. Then $\alpha + 1$ is a label. Since $\beta$ is the first label greater than $\alpha$, we have $\beta = \alpha + 1$. From the fact that $\alpha > 1$ and $\beta$ is not a multiple of $\alpha$ and vice versa, the mergeability of $v$ and $w$ leaves only one possible relation, $3 \alpha = 2 \beta$. We conclude that $\alpha = 2$ and $\beta = 3$. Consider any $\beta$-sequence, any of its consecutive elements fall into distinct $\alpha$-sequences. As there exists at least one $\beta$-sequence with at least 8 elements, the two proper $\alpha$-sequences must overlap to satisfy the last 4 elements, none of which can fall into the improper $\alpha$-sequence. Let $k$ be a label of a proper terminal of $v$. As $x$ is induced by a number of size at least 4, any proper terminal label of $v$ is also a terminal label of $x$. Thus, $k+1$ is not a label. This means that the other $\alpha$-sequence not terminating in $k$ cannot extend over $k+1$ and thus, either ends before $k$ or begins after $k+1$. Applying the same argument to the other sequence, we get that the two $\alpha$-sequences must not overlap as none can extend over the last element of the other. This is a contradiction and $x$ must be a proper terminal for $v$. 

Since $1$ is a proper terminal of $v$, the $\alpha$-cover of $M$ has only one non-trivial sequence with the first element $k$ (for some yet unknown integer $k$). Values $\beta$ and $\beta + \alpha$ are elements of $M$ with difference $\alpha$, and thus are consecutive elements of the only proper $\alpha$-sequence. Since the only terminal shared between $v$ and $w$ has the biggest label in $M$, $x$ is not a terminal of $w$ and thus $\beta+1$ is a label and must belong to the proper $\alpha$-sequence (as $\beta > \alpha$). Together, we have that the proper $\alpha$-sequence contains elements $\beta$ and $\beta+1$; thus, the difference $\alpha$ must equal to 1, which is a contradiction. 
\end{proof}

\subsection{Results}


\begin{proof}[of Theorem \ref{no1onoinj}]
Let $G$ be any graph from $\mathcal{C}(A_n)$ where $n \geq 39$ and let $\ell$ be any labeling of $G$. For contradiction, let $\ell(v) = 1$ for some vertex $v$. Clearly $v \in \psi(i)$ for some integer $i$, $1 < i \leq n+2$. As shown by Lemma \ref{1noton2} and Lemma \ref{1noton3}, $i > 3$. The Lemma \ref{1nothigh}, based on the previously mentioned lemmas, shows the complementary fact that $i \leq 3$. This is of course a contradiction. 
\end{proof}

The constant 39 is an artifact of used methods and may in fact be much smaller. While minimal labeling exists for graphs from $\mathcal{C}(A_6)$, based on a computer search we conjecture that there is in fact no minimal labeling for any graph from $\mathcal{C}(A_n)$ for any $n \geq 7$.  


\begin{proof} [of Theorem \ref{thm:loops}]
Let $G$ be a graph $A_n$ where $n \geq 39$. For contradiction let $G$ admit a minimal labeling. We replace each vertex of $G$ with loop by a clique, obtaining a graph $H \in \mathcal{C}(G)$, and assign all the new vertices from each new clique the label of the original vertex this clique replaces. We have a graph $H$ with sum labeling using the same labels as the labeling of $G$. Hence if $G$ admits a minimal sum labeling then $H$ must also admit a minimal sum labeling, which is a contradiction with Theorem \ref{no1onoinj}. 
\end{proof}

\section{Conclusion}

We have shown that the set $\{2,3,..,n-1,n,n+2\}$ for $n\ge 39$ induces a family of sum graphs with loops which admit no minimal labeling. Furthermore the loops can be replaced by a cliques of sizes at least two and we obtain an infinite family of non-injective sum graphs (without loops) which also admit no minimal labeling.

The constant 39 is an artefact of used methods and it might be much smaller. While minimal labeling exists for graphs from $\mathcal{C}(A_6)$, based on a computer search, we put forward the following conjecture: 

\begin{conjecture}
Let $G \in \mathcal{C}(A_n)$ such that $n \geq 7$, then $G$ allows no minimal labeling.  
\end{conjecture}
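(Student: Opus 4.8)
The plan is to follow the same three-part strategy that yields the result for $n \geq 39$ --- forbid label~$1$ on $\psi(2)$ and on $\psi(3)$ (as in Lemma \ref{1noton2} and Lemma \ref{1noton3}), and forbid label~$1$ on every $\psi(i)$ with $i \geq 4$ (as in Lemma \ref{1nothigh}) --- but to drive the threshold down from $39$ to $7$. The first step is to locate precisely \emph{where} the constant enters. The only role played by $n$ in the entire development is as the size of the label-set, and this is because blowing up a loop-vertex into a clique creates only mutually equivalent vertices: every $G \in \mathcal{C}(A_n)$ has exactly $n$ non-equivalent vertices, independent of the clique sizes. Since a vertex of $\psi(k)$ has $t_k = k+1$ terminals, the mergeability condition $|M| \geq 2(t_i-1)(t_j-1)+3$ reads $n \geq 2ij+3$, and the single worst pair actually invoked, namely $\psi(3)$ with $\psi(6)$, is what forces $n \geq 39$.

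First I would rework the terminal-structure statements of Lemma \ref{TermStruct} and the argument of Lemma \ref{1nothigh} so that they never appeal to the high-terminal vertices $\psi(5)$ and $\psi(6)$, relying only on $\psi(2)$, $\psi(3)$ and $\psi(4)$. This should be feasible because the essential mechanism is the interaction of the $\alpha$-cover and the $\beta$-cover (where $\alpha$ and $\beta$ are the common labels of $\psi(2)$ and $\psi(3)$, as fixed by Lemma \ref{Lab2Unique} and Lemma \ref{Lab3Unique}) together with the multiplicative relation forced by Lemma \ref{2MergeEq}; the vertices $\psi(5),\psi(6)$ were used mainly to count non-labels between the intervals of $M$, a count obtainable instead from $\psi(4)$ at the price of a more careful interval analysis. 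If this succeeds, the critical pair becomes $\psi(3)$ with $\psi(4)$, lowering the threshold to $n \geq 27$, and in those arguments needing only $\psi(2)$ with $\psi(3)$, to $n \geq 15$.

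What then remains is the finite range below the refined threshold, and here I would exploit the fact that the whole analysis depends only on $n$. For a hypothetical minimal labeling of a fixed $A_n$ in this range, take the $1$-cover of $M$ guaranteed by Lemma \ref{SeqCover}, whose number of non-trivial integer intervals is bounded by the terminal count of the vertex carrying label~$1$. I would use Lemma \ref{LabSmallK} to pin the smallest labels and then show that the maximal label is bounded by an explicit function of $n$, so that only finitely many candidate label-multisets survive. Eliminating these either by the same arithmetic on interval lengths and gaps used in Lemma \ref{1noton3}, or --- for the smallest cases --- by a rigorous enumeration over the bounded label range, would complete the proof.

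The hard part will be exactly the smallest cases, roughly $7 \leq n \leq 14$, where mergeability of $\psi(2)$ and $\psi(3)$ genuinely fails, so Lemma \ref{2MergeEq} no longer forces a clean relation between $\alpha$ and $\beta$ and the cover machinery loses its grip. For large $n$ the intervals of $M$ are numerous and short, which is what makes the gap-counting contradictions tight; for small $n$ the label-set can decompose into a few long intervals, leaving the arithmetic constraints slack. I expect that overcoming this will require replacing the asymptotic mergeability argument by a bespoke case analysis of the possible interval decompositions of $M$ for each small $n$, supported by the explicit label bound of the previous step to keep the enumeration finite and verifiable.
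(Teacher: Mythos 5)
This statement is stated in the paper as a \emph{conjecture}, not a theorem: the authors prove the non-existence of minimal labelings only for $G \in \mathcal{C}(A_n)$ with $n \geq 39$, explicitly note that minimal labelings \emph{do} exist for $\mathcal{C}(A_6)$, and leave the range $7 \leq n \leq 38$ open, supported only by computer search. So there is no proof in the paper to compare against, and your submission must be judged on its own as a proof attempt. As such it is a research plan rather than a proof, and it has concrete gaps. First, the claim that Lemma \ref{TermStruct}, Lemma \ref{1noton2}, Lemma \ref{1noton3} and Lemma \ref{1nothigh} can be reworked to avoid $\psi(5)$ and $\psi(6)$ is asserted but not carried out, and it is far from obvious: the final contradictions in the proofs of Lemma \ref{1noton2} and Lemma \ref{1noton3} are obtained precisely from the bounds $\delta \leq 5$ and $\epsilon \leq 6$ on the labels of $v_5$ and $v_6$ (e.g.\ ``$\epsilon$ must be at least 7, however $\epsilon \leq 6$''), so deleting those vertices deletes the contradiction; you would need a genuinely new argument from $\psi(4)$ alone, and you give none. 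You correctly locate the source of the constant ($2 \cdot 3 \cdot 6 + 3 = 39$ from merging $\psi(3)$ with $\psi(6)$), which is a useful diagnosis, but even a fully successful reworking would only lower the threshold to $27$ or $15$, not $7$.

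Second, the remaining range is not handled. For $7 \leq n \leq 14$ the mergeability hypothesis for $\psi(2)$ and $\psi(3)$ fails outright, so Lemma \ref{2MergeEq}, Lemma \ref{2MerOne}, Lemma \ref{Lab2Unique}, Lemma \ref{Lab3Unique} and everything downstream are unavailable; you acknowledge this and offer only the expectation of ``a bespoke case analysis.'' Moreover, the proposed finite enumeration rests on an unproven claim that the maximal label in a minimal labeling of $A_n$ is bounded by an explicit function of $n$: Lemma \ref{SeqCover} bounds the \emph{number} of arithmetic progressions in a cover, not their lengths or their largest elements, and nothing in the paper yields such a bound. (Note also that each $\mathcal{C}(A_n)$ is an infinite family of graphs; your observation that the analysis depends only on the $n$ equivalence classes is the right way to reduce this to one case per $n$, but the label bound still has to be proved before any enumeration is ``finite and verifiable.'') In short, the statement remains a conjecture, and your proposal, while correctly identifying the obstructions, does not close any of them.
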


Further computer experiments indicate that it is not necessary to restrict to the omission of the second-last element from the sequence $\{2,3,..,n-1,n,n+1,n+2\}$. Thus, we put forward the following conjecture regarding sum graphs with loops: 

\begin{conjecture}
Let $G$ be a gap-graph of size $k$, where $k \geq 10$, with gap $i$ such that $3 < i < k$, then $G$ admits no minimal labeling. 
\end{conjecture}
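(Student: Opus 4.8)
The plan is to mirror the argument behind Theorem~\ref{no1onoinj}, but to carry it out directly for graphs with loops --- which is cleaner, since a gap-graph carries loops and Lemma~\ref{SeqProper} then needs no improper-terminal exception --- and to parameterise every step by the gap position~$i$. I would assume for contradiction that a gap-graph $G$ of size $k$ with gap $i$, where $3<i<k$, admits a minimal labeling $\ell$, and let $x$ satisfy $\ell(x)=1$. As in the $A_n$ development, the $1$-cover of Lemma~\ref{SeqCover} then describes the label-set $M$ as a union of integer intervals separated by non-labels, with the number of intervals bounded by the number of terminals of $x$. The goal is to rule out every possible position of $x$, exactly as Lemmas~\ref{1noton2}, \ref{1noton3} and~\ref{1nothigh} do for $A_n$.

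The first key step is a gap-aware analogue of Lemma~\ref{TermStruct}. For a vertex labeled $m$ in the inducing set $\{2,\dots,k\}\setminus\{i\}$, a value $p\in M$ is a proper terminal precisely when $m+p\notin M$, that is, when $m+p>k$ or $m+p=i$; thus the terminals of the small vertices are governed jointly by the top of the range and by the single missing value~$i$. I would compute, as explicit functions of $i$, the proper-terminal counts of the vertices in $\psi(2),\dots,\psi(6)$: the top band $\{k-m+1,\dots,k\}\cap M$ contributes $m$ terminals, reduced by one when $i$ lies in it, and the gap contributes the extra terminal $i-m$ whenever $i-m\ge 2$ and $i\ne 2m$. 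The clean identity that $v_m$ has $m$ proper terminals --- which holds for $A_n$ because there the gap sits at the top of the band --- is thereby replaced by a count that branches on the band in which $i$ falls, and I would verify the nesting and non-equivalence of the resulting terminal sets needed to feed the merging machinery.

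With these counts fixed, the second step re-runs Lemmas~\ref{2MergeEq}, \ref{2MerOne}, \ref{LabSmallK}, \ref{Lab2Unique}, \ref{AlpLeqBet} and~\ref{Lab3Unique} essentially verbatim: mergeability of the low-label vertices forces a short list of multiplicative relations among their labels, pins the three smallest labels of $M$ to $1,\alpha,\beta$, and makes $\ell$ constant on each small $\psi(j)$. The final step is the two-sided exclusion. An analogue of Lemma~\ref{1nothigh} shows that if $\ell(x)=1$ with $x\in\psi(j)$ then $j\le 3$, because a difference-$1$ generator forces a long interval whose interior non-labels cannot be reconciled with the scarce terminals available; analogues of Lemmas~\ref{1noton2} and~\ref{1noton3} show $j\ne 2,3$ (note that $j=3$ is a genuine option here, since $i>3$ keeps $3$ in the inducing set). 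Together these leave no admissible value of $j$, which is the contradiction.

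The main obstacle I expect is twofold. First, the terminal structure genuinely branches on the gap position: when $i$ is near the middle of the range the terminals of the small vertices stay well separated, but when $i$ is close to some $2m$ the gap terminal $i-m$ collides with a loop or with a top-band terminal, so the case analysis must be split into bands of $i$ rather than handled for one distinguished gap as in $A_n$. Second, dropping the threshold from $n\ge 39$ to $k\ge 10$ removes almost all the slack on which the long-interval arguments rely: the label-set has only $k-2$ elements, so for $k\le 13$ no interval can even reach the twelve labels that a long interval nominally requires, and the long-interval machinery must be rebuilt from scratch. I would therefore replace the crude six-label bookkeeping with an exact, gap-dependent count of the non-labels forced between consecutive intervals, and I fully expect that a bounded number of the smallest gap-graphs would still have to be dispatched by direct inspection or by the computer search the authors describe.
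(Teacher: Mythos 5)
This statement is stated in the paper as an open \emph{conjecture}: the authors explicitly write that they do not prove it, and their actual results establish only the special case where the gap sits at the second-to-last position ($i=n+1$ in their notation, i.e.\ the graphs $A_n$) and the size is large ($n\ge 39$, so $k\ge 41$). Your proposal is therefore not being measured against an existing proof but against the question of whether it closes the conjecture itself --- and it does not. What you have written is a research plan that correctly identifies the two places where the paper's machinery breaks, but it resolves neither. First, the gap-dependent analogue of Lemma~\ref{TermStruct} is not just a matter of recomputing terminal counts: points 4, 5 and 6 of that lemma (the unique universally shared terminal being the highest label, and the nesting $\tau(v_k)\subset\tau(v_j)$ for $j\ge k+2$) depend on the gap lying in the top band, and for a mid-range gap the extra terminal $i-m$ of $v_m$ destroys the nesting and can create a second shared terminal. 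The downstream lemmas (\ref{1noton2}, \ref{1noton3}, \ref{1nothigh}) use these structural facts repeatedly, so ``re-run essentially verbatim'' is not available; you would need a genuinely new case analysis that you only gesture at.

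Second, and more fatally for the range $k\ge 10$: mergeability of $v_3$ and $v_6$ requires at least $2\cdot 3\cdot 6+3=39$ non-equivalent vertices, while a gap-graph of size $k$ has only $k-2$ vertices. For $k$ anywhere near $10$ the entire cover-merging apparatus (Lemmas~\ref{mergeability}, \ref{2MergeEq}, \ref{2MerOne}, and everything built on them) simply does not apply --- this is a stronger obstruction than the long-interval slack you mention, and ``dispatched by direct inspection or computer search'' is not a proof of the infinitely many cases with $10\le k<41$ combined with all admissible gap positions at every size. In short, your proposal is a reasonable outline of how one might attack the conjecture, consistent in spirit with how the authors handled their one tractable case, but it contains no argument that actually closes either the small-$k$ regime or the general-gap regime, both of which remain open in the paper as well.
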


\section{Acknowledgements}
This paper is the output of the 2016 Problem~Seminar. We would like to thank
Jan~Kratochv\'{i}l and Ji\v{r}\'{i}~Fiala for their guidance, help and tea.

\newpage

\section*{Appendix}

\begin{proof}[of Theorem \ref{No1wlEx}]
Consider a sum graph induced by the set $M = \{2,3,4,6,7\}$. For contradiction, assume there is a labeling $\{a,b,c,d,e\}$ using label 1. Let $v_i$ denote the unique vertex from $\psi(i)$ for every $i \in M$. As vertices $v_2,v_3,v_4$ are not equivalent, the labels $a,b,c$ need to be distinct. If we take the greatest of the three labels, it belongs to a vertex incident with two edges with distinct edge-numbers that need to be guaranteed by labels $d,e$.  We conclude that $a,b,c < d,e$. Therefore, the only values that can be equal to 1 are $a,b$ and $c$. Without loss of generality we can also assume $a<b$ and $d<e$. Let us go through the remaining cases:
\begin{enumerate}
\item $a=1$: since there is a loop incident with the vertex $v_2$, there needs to be a vertex labeled 2 and from $b,c<d,e$ we see it has to be either $b$ or $c$.

\begin{enumerate}
\item $b=2$: as there is no edge $v_2v_3$, no label can be equal to 3. Since $v_2$ has a loop with edge-number 4, we have $c=4$. Because of edges $v_2v_4$ and $v_3v_4$ we have $d=5$ and $e=6$. This is a contradiction because there is no edge $v_2v_6$.
\item $c=2$: since $v_2v_4$ is an edge and $b<d,e$, it holds $b=3$. From the edge $v_3v_4$ and the loop on $v_3$, we conclude $d=5$ and $e=6$, which is a contradiction because there is no edge $v_2v_6$.
\end{enumerate}

\item $c=1$: from the presence of the edge $v_2v_4$ and $b<d,e$ we have $b=a+1$. From the edge $v_3v_4$ and loops on $v_2$ and $v_3$ we have three edge-numbers $2a,2a+2$ and $a+2$ that are all distinct from labels $a,b$ and $c$ because $a>1$. Two of these edge-numbers need to be equal, as there are only two vertices to guarantee all three. The only option is $2a=a+2$, which gives us $a=2$. As there is the edge $v_2v_4$ and $b<d,e$, it follows $b=3$ and consequently $d=4$ and $e=6$. This is a contradiction because there is no edge between $v_2v_6$.
\end{enumerate}
\end{proof}

\begin{proof}[of Lemma \ref{mergeability}]
Suppose that the number of non-equivalent vertices of $G$ is at least $2 t_i t_j + 1$. As there are at most $t_i$ sequences in $C_i$ and at most $t_j$ sequences in $C_j$, there are at most $t_i \cdot t_j$ possible intersections. Since all labels are covered by both $C_i$ and $C_j$, each label falls to one of the intersections. The claim holds via Pigeonhole principle. 

Notice that if $C_i$ contains the maximum number of distinct sequences, from Lemma \ref{SeqCover} we know that one of the sequences is a singleton sequence $(d_i)$. Clearly this sequence can have at most one non-zero intersection with any sequence from $C_j$, and it can be of size at most one. From a symmetrical argument for $C_j$ we get that there are actually at most $(t_i - 1) \cdot (t_j - 1)$ possible intersections that can contain more than one element and two extra intersections containing at most one element each. Thus, the lemma holds via Pigeonhole principle. 
\end{proof}

\begin{proof}[of Lemma \ref{TermStruct}]
The first point is an observation based on the structure of $A_n$ resp. $\mathcal{C}(A_n)$. The claim holds for $A_n$. As all the chosen vertices are induced by small values, the involved proper terminals are loopless pair-wise non-equivalent vertices (induced by high values) in $A_n$, so the number of proper terminals is the same in $G$ and all the proper terminals are non-equivalent in $\mathcal{C}(A_n)$. 

To prove the second point, we first note that all vertices in $A_n$ are non-equivalent. Thus, the same holds for any two vertices of $G$ induced by distinct integer values. From this we have that there are at least $n$ non-equivalent vertices in $G$. From the definition of mergeability, it suffices to show that $v_3$ and $v_6$ are mergeable, as the condition for these vertices is the most restrictive. Applying the first point, we get that the condition explicitly states $n \geq 2 \cdot 3 \cdot 6 + 3 = 39$, which is satisfied. 

The third point is a direct consequence of the first two points and Lemma \ref{2MerOne}. 

The fact that the vertex with the highest label is a terminal for every vertex follows as the highest label is necessarily the last element of a sequence in any cover of the label-set. On the other hand, there can be at most one universally shared element, as the proper terminals of $v_2$ are induced by values $n-1$ and $n+2$, while the proper terminals of $v_3$ are induced by the values $n-2$, $n$ and $n+2$. None of these values induces multiple vertices in $A_n$, which together with the previous observation proves points 4 and 5. 

To prove the sixth point, let us consider $\tau(v_k)$ in $A_n$. Whenever vertex $w$ induced by an integer $i$ is a terminal for $v_k$ then either $i+k = n+1$ or $i+k > n+2$. In both cases it also holds that $i+(k+2) > n+2$ and thus $w$ is a terminal for $v_{k+2}$. 

The final point holds as all vertices induced by values 2 to 6 are connected in $A_n$ and thus also in $G$. 
\end{proof}

\begin{proof}[of Lemma \ref{1noton2}]
Let us fix arbitrary vertices $v_i$ such that $v_i \in \psi(i)$ for values of $i$ from 2 to 6 and let us denote their labels as $\alpha := \ell(v_2), \beta := \ell(v_3), \gamma := \ell(v_4), \delta := \ell(v_5), \epsilon := \ell(v_6)$. For contradiction let $\alpha = 1$. We use the observation from the previous Lemma \ref{TermStruct}  and Lemma \ref{2MerOne} to reach contradiction. 

As $v_2$ has three terminals, $M$ composes of at most two non-trivial intervals with possible trivial interval $\{1\}$, let $X,Y$ denote the intervals other than $\{1\}$ so that $X < Y$ and let $k$ be the last label of $X$. 

Label $k$ is a terminal for $v_2$, thus it is also a terminal for $v_4,v_5$ and $v_6$. On the other hand, $k$ is not a terminal for $v_3$ as its not the highest label in $M$. 

Consider label of $v_3$ (denoted $\beta$). As $k$ is a not terminal for $v_3$, the element $k + \beta \leq k + 3$ is an element of $Y$, so there are at most two elements in the gap between the intervals $X$ and $Y$. Furthermore, if the gap contains two elements, then $\beta$ must be exactly 3. 

Consider labels of $v_4$ and $v_5$ (denoted $\gamma$ resp. $\delta$). As $k$ is a terminal for both of them (and $v_2$), we have that $k+\gamma \leq k+4$, $k+\delta \leq k+5$ and $k+\alpha = k+1$ are three distinct non-labels. As there are at most two of them in between $X$ and $Y$, we get that the sum of the lengts of the gap and the interval $Y$ is at most 3. Thus $Y$ contains at most two elements and $X$ is long. 

Consider the gap to have size 2. We have $\beta = 3$ and since the first interval is long, we have that $v_3$ has two terminals $i,j$ such that $i+\beta,j+\beta$ fall into the gap. As $v_3$ has three proper terminals, $Y$ has at most one element; as otherwise, it would hold two additional terminals for $v_3$. We know that all terminals of $v_2$ and $v_3$ are also terminals of $v_5$. Since $\beta = 3$, we have that $k,k-1,k-2$ are terminals of $v_5$. Thus, $\delta$ is a value such that none of the values $k+\delta,k-1+\delta,k-2+\delta$ are equal to $k$ or $k+3$. As $\delta \leq 5$, there is no possible value of $\delta$ and we reach a contradiction. This shows that the size of the gap is 1. 

Consider the interval $Y$ to have size 1. The interval $Y$ must consist of a single element $k+2$. As $k$ is not a terminal for $v_3$, necessarily $k + \beta$ is the only possible label $k + 2$ and thus $\beta = 2$. Notice that the only proper terminals of $v_3$ are now $k-1$ and $k+1$. However $v_3$ has three proper terminals, which is a contradiction and we conclude that $|Y| = 2$. 

Since $k$ is a terminal for $v_2$, it is also a terminal for $v_4$, thus $k + \gamma$ is a non-label and $k+2 \leq k+\gamma \leq k+4$. There is only one such non-label, $k+4$ thus $\gamma = 4$. We have that $(k + 1) - \gamma = k-3$ is a terminal of $v_4$ and thus also a terminal of $v_6$. The label $k-3+\epsilon$ is a non-label. Since $\epsilon \neq \gamma$, the lowest possible non-label is $k+4$, meaning that $\epsilon$ must be at least 7, however $\epsilon \leq 6$ and we reach a contradiction with the existence of the assumed labeling $\ell$. 
\end{proof}

\begin{proof}[of Lemma \ref{Lab3Unique}]
We proceed analogously as in the proof of Lemma \ref{Lab2Unique}. For contradiction, let $v_1,v_2 \in \psi(3)$ have distinct labels $\beta_1, \beta_2$. Without loss of generality, $\beta_1 < \beta_2$. As $v_1$ is mergeable with $v_2$ and each has four terminals, Lemma \ref{2MergeEq} implies that one of the equations $2\beta_1 = \beta_2; 3\beta_1 = \beta_2; 3\beta_1 = 2\beta_2$ holds. 

Let $\alpha$ denote a label of any vertex from $\psi(2)$. From mergeability, both values $\beta_1,\beta_2$ relate to $\alpha$, according to Lemma \ref{2MergeEq}. Since from the Corollary \ref{AlpLeqBet} we have $\alpha < \beta_1,\beta_2$, the only possible values of $\beta_1,\beta_2$ are $2\alpha,3\alpha,\frac{3}{2}\alpha$. The only two solutions are $\beta_1 = \frac{3}{2}\alpha; \beta_2 = 3\alpha$ and $\beta_1 = 2\alpha; \beta = 3\alpha$. 

Suppose the former solution holds. All values 1,$\alpha$,$\beta_1$ are strictly smaller than $\beta_2$ and since $\beta_2$ belongs to the four smallest labels, there are no further elements smaller than $\beta_2$. From the last point of Lemma \ref{TermStruct}, we have that $\alpha + \beta_1$ is a label. From the relations, we have that $\beta_1 < \alpha + \beta_1 < 3 \alpha = \beta_2$. Thus, $\alpha + \beta_1$ is another distinct element strictly smaller than $\beta_2$ and we reach a contradiction. 

Thus, the latter solution must apply. Consider $\beta_2$-cover, it must consist of three proper sequences with first elements $1,\alpha$ and $\beta_1$. As $\beta_1 = 2\alpha$, the only labels that are not multiples of $\alpha$ must be of form $1+i\beta_2 = 1+3i\alpha$ for non-negative integer values $i$. Consider the $\alpha$-cover, it must consist of two proper sequences with first elements $1$ and $\alpha$, as $2\alpha = \beta_1$ is a label. Thus, if there is more than one label that is not multiple of $\alpha$, then there is label $1+\alpha$. Together with the previous, $1+\alpha$ is not a label and thus, there is at most one label that is not a multiple of $\alpha$, label 1. This means that 1 is a terminal for vertex from $\psi(2)$ and simultaneously for a $v_2$. This is a contradiction as $\psi(2)$ and $\psi(3)$ can only share the highest label. 
\end{proof}

\end{document}